\documentclass[10pt,pdflatex,sn-mathphys-num]{sn-jnl}

\usepackage{thmtools}
\usepackage{graphicx}
\usepackage{multirow}
\usepackage{amsmath,amssymb,amsfonts,amsthm}
\usepackage{mathrsfs}
\usepackage[title]{appendix}
\usepackage{xcolor}
\usepackage{textcomp}
\usepackage{manyfoot}
\usepackage{booktabs}
\usepackage{algorithm}
\usepackage{algorithmicx}
\usepackage{algpseudocode}
\usepackage{listings}
\usepackage{braket}
\usepackage{dsfont}
\usepackage{ulem}
\usepackage{tabularx}
\usepackage{multirow}
\usepackage{array}
\usepackage{nicematrix}
\usepackage[position=top]{subcaption}
\usepackage{siunitx}
\usepackage{makecell}

\newcommand{\Id}{\mathds{1}}
\DeclareMathOperator{\Var}{Var}
\DeclareMathOperator{\Tr}{Tr}

\DeclareMathOperator{\EV}{\mathbb{E}}

\theoremstyle{thmstyleone}

\theoremstyle{thmstyletwo}

\theoremstyle{thmstylethree}
\newtheorem{definition}{Definition}

\theoremstyle{thmstylethree}
\newtheorem{property}{Property}

\theoremstyle{definition}

\theoremstyle{plain}

\theoremstyle{definition}

\begin{document}

\title[Private training in quantum machine learning]{Private training in quantum machine learning}

\author[1]{\fnm{Tigran} \sur{Sedrakyan}}
\author[1]{\fnm{Frédéric} \sur{Grosshans}}
\author[1,2]{\fnm{Elham} \sur{Kashefi}}

\newcommand{\todo}[1]{\textcolor{red}{#1}}
\newcommand{\rfndd}[1][ref needed]{\todo{[#1]}}

\affil[1]{\orgname{Sorbonne Université, CNRS, LIP6}, \orgaddress{\city{Paris}, \postcode{75005} \country{France}}}
\affil[2]{\orgname{School of Informatics, University of Edinburgh}, \orgaddress{\city{Edinburgh}, \postcode{EH8 9AB} \country{United Kingdom}}}

\abstract{With the emergence of machine learning (ML) models trained on large datasets containing potentially sensitive data, a major question in AI safety is how to make learning private with respect to the training data. Similar to classical machine learning, quantum machine learning (QML) models are not devoid of privacy vulnerabilities. Differential privacy (DP) is a standard tool for training ML models on sensitive data, but its impact in QML remains poorly understood. In this work we study private training in hybrid variational QML models using a classical private DP-SGD optimizer applied to pipelines with classical inputs and outputs. We analyze the interplay between gradient clipping and calibrated noise addition in DP-SGD, and its impact on optimization and accuracy for noisy and noiseless quantum models. We first explain why quantum noise does not provide a satisfactory replacement for the calibrated noise in DP-SGD for ensuring privacy. We then show how the deterministic bounds on gradient norms for a wide class of quantum models translate into explicit control of the detrimental clipping bias introduced by DP-SGD. Finally, we formulate a numerical comparison protocol under fixed clipping threshold and privacy budget and evaluate it on synthetic and image-classification tasks for equivalent quantum and classical models. Our results suggest that quantum models can retain higher accuracy in private-training regimes where the formal privacy guarantee is ensured by a classical DP-SGD mechanism.}

\keywords{differential privacy, quantum machine learning, DP-SGD}

\maketitle

\newpage
\tableofcontents
\newpage
\section{Introduction}

Learning from sensitive data while providing rigorous privacy guarantees is central to the emerging field of AI safety. Differential privacy (DP) offers an analytical way to protect individuals, bounding how much one record in a dataset can affect the output distribution of a randomized algorithm, by adding carefully calibrated noise to the outputs of a mechanism \cite{dwork2014foundations}. Over the last decade, DP has been brought into practice for training deep models through DP stochastic gradient descent (DP-SGD) \cite{abadi2016deep,mironov2017renyi,wang2019subsampled}. DP-SGD computes per-example gradients, clcalips each of them to a fixed norm threshold, and then adds calibrated Gaussian noise yielding rigorous privacy guarantees composed across training iterations, but at a nontrivial utility (accuracy) cost. In particular, per-example clipping is not merely a technical device for bounding sensitivity: it also distorts the optimization and can become a major source of performance degradation, which has been studied previously \cite{clipping-geometry,koloskova-clipbias,adaclip}.

On the other hand, a rapidly growing body of literature studies quantum machine learning (QML) for classical data \cite{qml-survey,primer-qml}, where classical data are encoded in quantum states and processed using parameterized quantum circuits. In this context, differentially private QML routines have been demonstrated, typically by combining classical DP-SGD with hybrid models, and early studies suggest that utility can be retained while enforcing a meaningful privacy budget $(\varepsilon,\delta)$ \cite{watkins2023qml,federatedqml2023,li2023qdpmeas}. The division of labor in these models is convenient analytically: it makes explicit where sensitivity arises, keeps the privacy notion identical to the one used in classical machine learning, and lets us ask whether QML enjoys a structural utility advantage at fixed privacy budget.x

In our work, we first discuss whether quantum hardware noise and shot noise can play a beneficial role in private training. We conclude that quantum noise may marginally improve empirical utility by symmetrizing the pre-clipping gradient distribution, but by itself it does not provide a practically tight replacement for standard DP-SGD in the record-level privacy setting studied here. In light of this, we focus on a simple question with practical consequences: is it possible to have an empirical advantage in privacy-utility trade-offs, by applying classical DP mechanisms to QML pipelines, compared to classical neural networks? Our starting point is that quantum machine learning (QML) is structurally different. In the class of variational quantum models considered here, the trainable quantum component evolves unitarily by design, while the model outputs are extracted through expectation values of bounded observables. Using these facts, we study privacy-preserving training on classical datasets and isolate a mechanism implying deterministic bound of DP-SGD clipping bias due to norm-controlled quantum gradients. We do not present novel privacy guarantees here; rather, using simple arguments, we support numerically the thesis that off-the-shelf classical DP mechanisms transfer naturally to hybrid QML settings, where they perform well due to structural reasons.

Similar structural reasons have been explored in classical learning literature as well. Classical models sometimes impose orthogonal or unitary parameterizations to stabilize gradients, use bounded activation functions or explicitly constrain operator norms of every layer in the network for robustness and sensitivity reasons, as observed in Lipschitz neural networks \cite{tanh-dp,arjovsky2016unitary,mhammedi2017efficient,bethune2024dpsgd}. In such models, norm preservation is an additional architectural constraint, which eliminates the need for clipping entirely, but comes with certain expressivity caveats. In the hybrid quantum setting considered here, analogous control is inherited from the model class itself through unitary evolution together with bounded readout. In numerical simulations we compare how quantum models compare to classical gradient norm-constrained Lipschitz networks.

Our work should also be distinguished from the existing privacy literature around quantum learning. Quantum differential privacy (QDP) extends privacy guarantees to quantum channels and output states \cite{hirche2022,angrisani2023unifying}. The broader privacy context around QML includes privacy guarantees in federated settings, as well as the role of hardware and measurement noise. Some proposals explore whether device-level randomness can be used for DP, for instance by calibrating to measurement noise or by composing device channels with classical noise to meet a target privacy budget \cite{li2023qdpmeas}. Some works extend this idea to private inference and adversarial robustness in QML \cite{quantum-adversarial-robustness}. Others consider formal bridges between quantum and classical DP where the privacy of a quantum output implies the privacy of all classical statistics derived from it, facilitating end-to-end reasoning about hybrid learning pipelines \cite{bridging2024}. Surveys and position papers are beginning to map the broader landscape, including risks particular to QML, evaluation gaps, and the significance of auditing~\cite{heredge2025characterizing,song2025unlearning}. Importantly, in our work we do not consider QDP applied to quantum states, but rather focus on training-time privacy of classical training data in hybrid QML pipelines, with classical inputs and outputs.

In summary, we make three contributions. First, we show that using quantum noise as the only noise source in DP-SGD is too restrictive and dilutes the privacy budget beyond practical thresholds; this claim hinges on the propagation of quantum noise to the loss gradients and a gradient bound for the class of variational quantum models considered here. Second, we show that the presence of this deterministic gradient bound allows for tighter control of the clipping bias induced by DP-SGD, providing a concrete structural explanation for why quantum models may tolerate smaller clipping thresholds than matched classical baselines under the same privacy mechanism. Finally, we formulate an experimental comparison protocol for matched quantum and classical models under identical clipping and privacy parameters, linking the theory to the observed utility in numerical study.

The rest of the paper is organized as follows. Section~\ref{sec:background} reviews the relevant background on quantum machine learning, differential privacy, and DP-SGD. Section~\ref{sec:results} fixes the learning setting and presents the main analytical results: we first analyse the propagation of quantum shot and hardware noise to loss gradients, and then derive a deterministic clipping-bias bound for quantum models. We also explain why native quantum noise is not a satisfactory replacement for the formal privacy mechanism even though it may improve utility. Section~\ref{sec:experiments} describes the numerical studies.

\section{Background}
\label{sec:background}

This section reviews the concepts used throughout the paper. We keep the discussion self-contained while emphasizing the aspects that matter most for private training: the structure of quantum machine learning models, the definition of differential privacy, and the role of clipping in DP-SGD.

\subsection{Quantum machine learning}

Quantum machine learning (QML) refers to machine learning where data are processed by parameterized quantum circuits and the resulting measurement statistics, usually Pauli expectation values, form the basis for an objective function, which is then minimized using classical gradient-based methods. For a recent and comprehensive overview of techniques and methods in QML we refer to \cite{primer-qml}. Here, we overview a common supervised learning scenario with classical data, with training dataset $D$ and entries $(x_i, y_i) \in D$, where $x_i$ are the features and $y_i$ the label.

The first step in any hybrid QML pipeline working with classical data is the encoding. Let \(\mathcal{X}\subseteq\mathbb{R}^n\) be the input domain. An encoding is a map \(\rho:\mathcal{X}\to\mathcal{H}\). A common example is amplitude encoding \(x\mapsto \rho(x) = |\phi(x)\rangle\langle\phi(x)|\) with \(|\phi(x)\rangle=\sum_{j=1}^{n} x_j |j\rangle\) for normalized $x$ with \(\|x\|_2 = 1\). Other common encoding choices include angle, basis encodings and data reuploading schemes \cite{Rath2024,PerezSalinas2020datareuploading}. Encodings control both expressivity and stability, and they determine how neighboring inputs appear in the state space: amplitude or angle encodings implement data-dependent feature maps whose induced kernels can be highly expressive, while concentrated encodings may separate classes better but also increase sensitivity.

After the encoding, a variational circuit applies a sequence of parametrized gates interleaved with fixed entanglers defining the ansatz, and a set of observables is measured to define the model output.

\begin{definition}[Hybrid variational quantum classifier]
Consider a $K$-class classification task. Let $\rho(x)$ be a data-encoding state on a Hilbert space $\mathcal{H}$,
$U(\theta)$ a parameterised unitary with $\theta\in\mathbb{R}^{p}$, and
$\{O_m\}_{m=1}^{M}$ a set of Hermitian observables on $\mathcal{H}$. The
quantum part of the model produces the expectation values
\begin{equation}
  q_m(x,\theta) \;=\; \mathrm{Tr}\!\big(O_m\, U(\theta)\,\rho(x)\,U(\theta)^{\dagger}\big),
  \qquad m = 1,\dots,M
  \label{eq:hybrid-features}
\end{equation}
A fixed or trainable classical post-processing map $h_{\phi}\colon \mathbb{R}^{M} \to \mathbb{R}^{K}$ with parameters $\phi$, turns these into class logits
\begin{equation}
  f(x,\theta,\phi) \;=\; h_{\phi}\!\big(q_{1}(x,\theta),\dots,q_{M}(x,\theta)\big),
  \label{eq:hybrid-classifier}
\end{equation}
which feed the loss $\ell(f(x,\theta,\phi),\,y)$ for a label $y\in\{1,\dots,K\}$.
\label{def:hybrid-classifier}
\end{definition}

In the simplest case $h_\phi$ is the identity map (with $M=K$) and does not depend on $\phi$, so $f(x,\theta) = \big(q_1(x,\theta),\ldots, q_K(x,\theta)\big)$. Variational classifiers then act as tunable non-classical feature maps whose outputs can be combined with standard losses and regularizers. In contrast to classical learning models, the output is not a vector of hidden activations but a set of expectation values of observables, which are optionally postprocessed classically and plugged into a classical optimisation loop.

With these ingredients, the generic supervised training setup proceeds as follows:
\begin{enumerate}
\item classical data $x_i$ are encoded as quantum states $\rho(x_i)$;
\item a variational quantum circuit $U(\theta)$ is applied;
\item expectation values of one or more observables $\{O_m\}$ are computed;
\item the measurement results are post-processed to form the objective function $f(x_i, \theta)$ and the per-example loss $\ell(f(x_i, \theta), y_i)$, with the population loss defined as $\mathcal{L}(\theta) := \EV_{(x_i,y_i)\sim D}\!\left[\ell(f(x_i, \theta), y_i)\right]$;
\item using a gradient-based (or gradient-free) optimizer, a classical subroutine updates $\theta$ in order to minimize $\mathcal{L}$
\end{enumerate}

The same loop repeats across training iterations, also known as epochs. In mini-batch stochastic gradient descent (referred to simply as SGD), at every epoch the dataset is randomly subsampled into mini-batches, each of which is processed separately, after which the loss is averaged over the entries in the mini-batch. Once all entries in the dataset have been processed once, the epoch concludes. The stochastic gradient obtained over a batch $B$, $g_B = \frac{1}{|B|} \sum_{i \in B} \partial_\theta \ell(f(x_i, \theta), y_i)$, is an unbiased estimator of the true gradient, so that $\EV[g_B] = \partial_\theta \mathcal{L}(\theta)$, meaning that SGD converges to the true optimum, provided enough training epochs.

The ansatz determines trainability and gradient concentration of the model: hardware-efficient and problem-inspired ansatzes can behave very differently with regard to barren plateaus and effective parameter counts. Highly non-local ansatzes are attractive for computational advantage in terms of expressivity, but are also vulnerable to barren plateaus, where gradients concentrate near zero as the number of qubits grows, slowing training or rendering it ineffective \cite{Larocca2025}. The primary reasons for barren plateaus are multiple: highly entangled initial states, overly expressive circuit ansatzes, global observable measurements, and noise can all contribute to their emergence. Conversely, architectures that are shallow and local are often devoid of barren plateaus and are easier to train, but may be closer to classical simulability. In this regime, gradient estimation and loss evaluation can often be performed efficiently on classical hardware, which weakens the case for a computational advantage even if learning is stable. In this work we consider quantum architectures within the classically simulable and trainable regime, so the potential for computational advantage is limited. In this scope, the quantum circuit may be executed on quantum hardware or simulated classically, but the claims apply independently of this choice.

\subsection{Differential privacy}

Differential privacy~\cite{dwork2014foundations} formalizes the idea that the presence or absence of a single individual should not significantly change the distribution of an algorithm's outputs. Informally, if two datasets are identical except for one record, then a DP mechanism should make them almost indistinguishable to any observer, regardless of the available side information. The notion is parameterized by $(\varepsilon,\delta)$, where $\varepsilon$ controls the strength of indistinguishability and $\delta$ allows for a small failure probability.

\begin{definition}[(\texorpdfstring{$\varepsilon,\delta$}{epsilon,delta})-differential privacy]\label{def:dp}
Let $\mathcal{M}$ be a randomized mechanism mapping datasets to outputs in a measurable space. For parameters $\varepsilon\ge 0$ and $\delta\in[0,1]$, $\mathcal{M}$ is $(\varepsilon,\delta)$-differentially private (DP) under a fixed adjacency relation on datasets if, for all adjacent $D \sim D'$ and all measurable $S \subseteq \mathrm{im}\,\mathcal{M}$,

\begin{equation}
\Pr[\mathcal{M}(D)\in S] \le e^\varepsilon\,\Pr[\mathcal{M}(D')\in S] + \delta
\end{equation}
\end{definition}

We will consider two datasets adjacent $D \sim D'$ (or neighboring) if they differ by exactly one record, which is present in one dataset and absent in the other, i.e. $x \notin D', D = \{x\} \cup D'$. The two prerequisites for enforcing DP are global sensitivity and noise calibration. For a vector-valued query on the dataset $Q(D)\in\mathbb{R}^d$, the $\ell_2$ global sensitivity is $\Delta_2(Q)=\sup_{D\sim D'}\|Q(D)-Q(D')\|_2$. The most commonly used Gaussian mechanism for DP releases $\mathcal{M}(D) = Q(D)+\xi$ with $\xi\sim\mathcal{N}(0,\sigma^2 I_d)$; by definition $(\varepsilon, \delta)$ corresponds to a noise scale defined by: 

\begin{equation}
\sigma_{\varepsilon, \delta} \geq \frac{\sqrt{2\ln(1.25 / \delta)}\Delta_2(Q)}{\varepsilon}
\label{eq:dp-sigma}
\end{equation}

Some important properties of DP make differentially private training of machine learning models possible. Along with the definition of DP, these properties provide a quantitative privacy guarantee that is robust to arbitrary side information and does not depend on the details of a particular privacy attack. First, DP is closed under post-processing, which formalizes that once a privatized object is released, any further, potentially randomized transformation that does not look back at the raw data preserves the guarantee. For machine learning, this is of particular importance, since gradient descent optimizers often employ some form of post-processing, such as momentum and learning-rate schedules.

\begin{property}[Post-processing]
If $\mathcal{M}$ is $(\varepsilon,\delta)$-differentially private and $\mathcal{T}$ is any (possibly randomized) map that does not access the raw dataset $D$, then $\mathcal{T}\circ \mathcal{M}$ is also $(\varepsilon,\delta)$-differentially private.
\label{prop:postprocessing}
\end{property}

Second, the composition rules ensure that when composed, private mechanisms applied to the same dataset still form a private mechanism, albeit with different privacy parameters. In iterative training approaches such as gradient descent,  this is of utmost importance and allows for estimating privacy costs across training iterations. 

\begin{property}[Composition]
If mechanisms $\mathcal{M}_1,\ldots,\mathcal{M}_T$ are applied to the same dataset and each $\mathcal{M}_t$ is $(\varepsilon_t,\delta_t)$-differentially private, then the composed mechanism is $\big(\sum_t \varepsilon_t, \sum_t \delta_t\big)$-differentially private.
\label{prop:composition}
\end{property}

 One obvious drawback of direct composition is that for a fixed $(\varepsilon, \delta)$ privacy budget of the composed mechanism, at each step $t$ one is forced to calibrate $\varepsilon_t$ to the $\delta_t$ of that step, often chosen to be $\delta_t = \delta / T$, making each $\varepsilon_t$ larger and loosening the total privacy guarantees $\varepsilon$. Sharper composition theorems exist, such as advanced composition, which allows a quadratically better scaling for $\varepsilon$ at the expense of increased $\delta$, but recent private machine learning pipelines typically use R\'enyi differential privacy \cite{mironov2017renyi} because it composes tightly and converts cleanly to $(\varepsilon,\delta)$. It is based on R\'enyi divergence defined on probability measures $P$ and $Q$:

\[
D_{\alpha}(P\|Q)\;=\;\frac{1}{\alpha-1}\log \EV_{x\sim Q}\Big[\Big(\frac{dP}{dQ}(x)\Big)^{\alpha}\Big]
\]

\noindent for order $\alpha>1$ whenever $P$ is absolutely continuous with respect to $Q$.

\begin{definition}[R\'enyi differential privacy]
A mechanism $\mathcal{M}$ satisfies $(\alpha,\varepsilon(\alpha))$-RDP if, for all adjacent datasets $D,D'$, the distributions $\mathcal{M}(D)$ and $\mathcal{M}(D')$ obey $D_{\alpha}(\mathcal{M}(D)\|\mathcal{M}(D'))\le \varepsilon(\alpha)$.
\end{definition}

In particular for the Gaussian mechanism, $\varepsilon = \frac{\alpha \Delta_t ^2}{2 \sigma_t ^ 2}$. RDP also composes additively: if $\mathcal{M}_t$ are applied on the same data, with per-step RDP curves $\varepsilon_t(\alpha)$, then the composition satisfies $\varepsilon_{\mathrm{tot}}(\alpha)=\sum_{t=1}^{T}\varepsilon_t(\alpha)$-RDP at order $\alpha$. RDP converts to $(\varepsilon,\delta)$ via

\[
\varepsilon(\delta)\;=\;\min_{\alpha>1}\Big\{\varepsilon_{\mathrm{tot}}(\alpha)\;+\;\frac{\log(1/\delta)}{\alpha-1}\Big\}
\]

Private machine learning routines harness RDP by using RDP-accountants, which maintain tight composition bounds across training epochs. They achieve this by keeping track of the curve $\varepsilon_{\mathrm{tot}}(\alpha)$, without needing to choose $\delta_t$ for each training iteration and only converting back to $(\varepsilon, \delta)$-DP for a given $\delta$ when reporting the privacy budget.

\subsection{DP-SGD and gradient clipping}

In many machine learning applications, the training set contains information about individuals, such as medical records, financial transactions, or user behavior logs, and a trained model can leak part of this information even when the raw data are never released. Such leakage can occur through model parameters, confidence scores, or repeated query access, and is commonly studied through attacks such as membership inference \cite{shokri2017membership,yeom2018privacyrisk,carlini2021lira}. In this context, privacy of the training data can be ensured at different points in the model training and deployment pipeline as shown in Figure \ref{fig:dpml-pipeline}. 

The most intuitive approach is to perturb the training data first and then rely on the DP post-processing Property~\ref{prop:postprocessing}. This can be useful in settings such as federated learning, where each client privatizes its local data before sending it to a central aggregator, which can be untrusted. This is the local-DP regime, which is stronger than the global DP setting considered here because it requires less trust in the aggregator training algorithm, but it also tends to reduce the accuracy of learned models more severely \cite{dwork2014foundations, kairouz2021fl, MahawagaArachchige2019LocalDP}. 

If training data itself was not privatized, its privacy can be ensured at the training stage or alternatively during inference. Here we focus on the former. In private training the entire training algorithm is the private mechanism, so once the model has been released, any subsequent inference query is only post-processing and does not consume additional privacy budget. By contrast, if one trains a non-private model and injects noise only at inference time, then privacy losses compose across prediction queries via Property~\ref{prop:composition}, which makes that approach poorly suited for protecting training data in publicly accessible models. Inference-time noise can still be useful for other purposes, such as adversarial robustness \cite{pixeldp,quantum-adversarial-robustness} but it is not the main object of this paper.

\begin{figure}
    \centering
    \includegraphics[width=\textwidth]{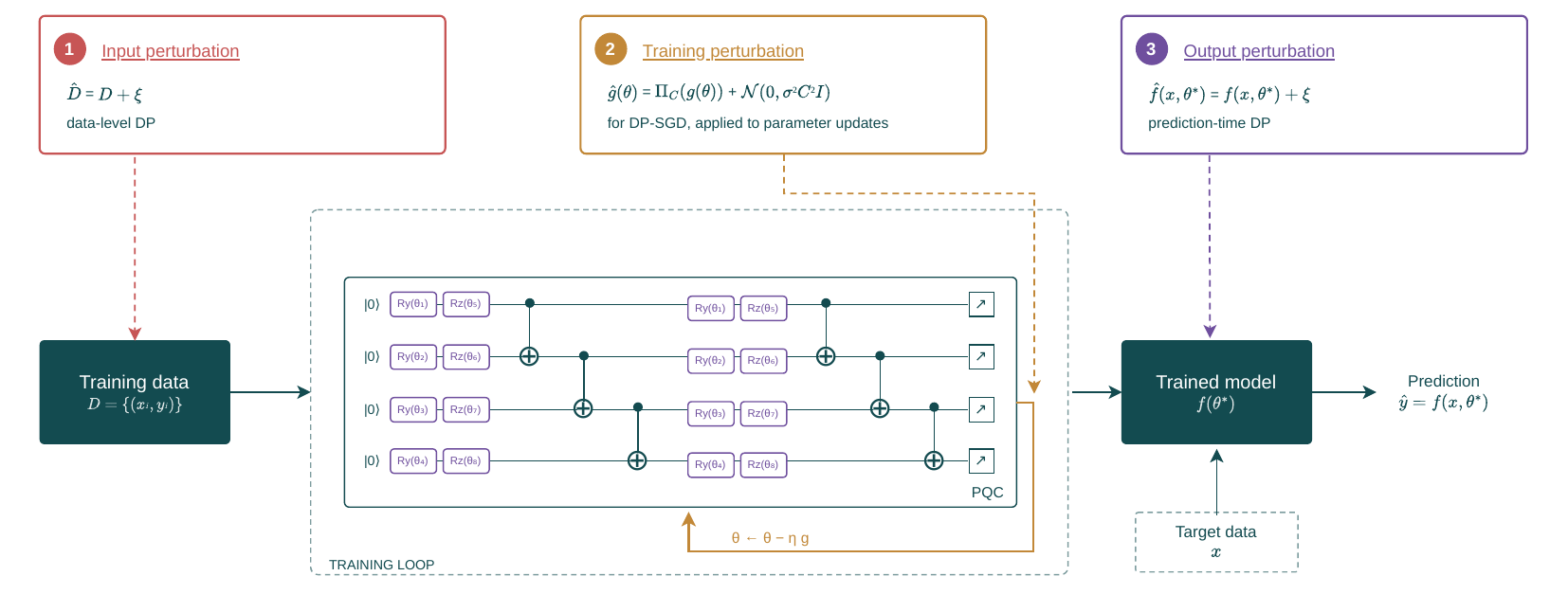}
    \caption{Schematic overview of differential privacy mechanisms in machine learning pipelines, highlighting where privatization can be applied during training and inference. Noise can be injected at input (1), training (2) or output (3)  levels. For private training, shown here is DP-SGD, applied to a training loop on a parametrized quantum circuit (PQC); however, training perturbation can happen differently and the underlying model can belong to any quantum or classical training paradigm.}
    \label{fig:dpml-pipeline}
\end{figure}

Differentially private training can be achieved by integrating randomized mechanisms into the training process so that a single entry cannot disproportionately affect the trained model. One possibility is to inject randomness directly at the model level or into the intermediate outputs of the model during training. In practice, however, because gradient computation is the last step of every optimization iteration, privacy composition is most naturally analyzed at the level of the loss gradients themselves, which also ensures the privacy of labels $y_i$, which enter the loss function directly. Privatizing the gradient updates directly is the primary principle behind differentially private stochastic gradient descent (DP-SGD) first proposed in~\cite{abadi2016deep}, which has emerged as the standard optimizer in the private deep learning literature because it provides clean and tight privacy bounds with relatively little loss of utility and does not require strong assumptions on the whole model. DP-SGD augments ordinary stochastic gradient descent in three steps. First, it computes per-example gradients so that each training record can be controlled individually. Second, it clips those gradients to a norm threshold $C$ to bound sensitivity. Third, it adds calibrated Gaussian noise to the clipped batch average. Everything that happens afterward is post-processing and therefore does not change the privacy guarantee.

The query to be privatized is the stochastic batch gradient query
\begin{equation}
Q(B)= g_B(\theta) = \frac{1}{|B|}\sum_{i\in B} g_i(\theta),
\end{equation}
where the per-example stochastic gradients 
\[
g_i(\theta):=\partial_\theta \ell(f(x_i,\theta),y_i)
\]
are an unbiased estimator of the population gradient
\begin{equation}
    g(\theta):=\EV[g_i(\theta)]=\partial_\theta \mathcal{L}(\theta),
\end{equation}
A common way to model the stochasticity of per-example gradients is to write
\begin{equation}
g_i(\theta)=g(\theta)+\xi_i(\theta),\qquad
\EV[\xi_i(\theta)]=0,
\label{eq:stoch-gradient-model}
\end{equation}
where \(\xi_i(\theta)\) denotes the centered stochastic fluctuation around the population gradient.

For the following discussion, we will use notations $i \in B$ and $(x_i, y_i) \in B$ interchangeably. For two neighboring datasets $D \sim D'$, and neighboring batches $B \sim B'$ defined so that the differing entry is present in one batch but not the other, the sensitivity of this query is
\begin{equation}
\Delta_2(Q)=\sup_{B\sim B'}\|Q(B)-Q(B')\|_2=\frac{1}{|B|}\sup_i\|g_i(\theta)\|_2
\end{equation}
Sensitivity directly affects the scale of the noise to be added in~\eqref{eq:dp-sigma} and controlling it requires a bound on the per-example gradient norm. Since most standard models do not provide such a bound naturally, one clips each per-example gradient to have norm at most $C$:
\begin{equation}
\Pi_C(g_i(\theta)):=g_i(\theta)\min\{1,C/\|g_i(\theta)\|_2\}
\label{eq:clipping}
\end{equation}
and the sensitivity is then $\Delta_2(Q) = C / |B|$. The overall DP-SGD update over a batch $B$ can be summarized as follows:
\begin{enumerate}
\item Sample a random batch $B \subset D$ and for every $i \in B$ compute the gradient $g_i(\theta)=\partial_\theta \ell(f(x_i,\theta),y_i)$.
\item Clip each gradient to the threshold $C$: $\bar g_i(\theta) = \Pi_C(g_i(\theta))$.
\item Aggregate the clipped gradients over the batch: $\bar g_B(\theta) = \frac{1}{|B|}\sum_{i\in B}\bar g_i(\theta)$.
\item Add Gaussian noise: $\widehat g_B(\theta)=\bar g_B(\theta) +Z$, where $Z\sim\mathcal N(0,\frac{\sigma^2 C^2}{|B|^2}I_p)$.
\item Update the parameters using $\widehat g_B(\theta)$ as the gradient estimate.
\end{enumerate}

This gradient descent loop then repeats for $T$ steps. One can express the noise multiplier $\sigma$ required by the full mechanism in terms of the batch subsampling rate $q=|B|/N$, the number of steps $T$, and the target privacy level $(\varepsilon,\delta)$. In particular, there exist constants $c_1,c_2>0$ such that, for any $\varepsilon<c_1 q^2T$ and $\delta > 0$, it suffices to choose
\begin{equation}
\sigma \ge c_2\,\frac{q\sqrt{T\log(1/\delta)}}{\varepsilon},
\label{eq:abadi-sigma}
\end{equation}
which yields $(\varepsilon,\delta)$-DP for the mechanism composed across $T$ steps.

The important point for optimization is that under a fixed privacy budget $(\varepsilon, \delta)$ the noise scale depends on batch size $|B|$, clipping threshold $C$ and the noise multiplier $\sigma$, which in turn depends on subsampling rate $q=|B|/N$ and number of steps $T$. Intuitively, larger batches average more per-example gradients, which reduces the variance of the stochastic gradient estimate; however, large $|B|$ also increases the sampling rate $q$, which enters the privacy accounting in \eqref{eq:abadi-sigma} and can force either fewer steps $T$ (early stopping) or a larger noise multiplier \(\sigma\). The expression for $\sigma$ is different for RDP accountants which admit sharper subsampling formulas for Poisson or fixed-size batch sampling; however, the practical dependence on the sampling rate $q$ remains \cite{mironov2017renyi,wang2019subsampled}. In practice, \(|B|\) therefore trades off per-step utility (better updates) against per-step privacy cost (faster privacy budget spending).

Likewise, clipping bounds sensitivity, but it also sets the added-noise scale. This makes the role of $C$ twofold: decreasing $C$ reduces both the sensitivity and the noise scale, but it also increases the risk of clipping and therefore of bias, inherent to clipping in the stochastic setting. To compare the impact of DP-SGD on training and gradient updates for different models under identical clipping and privacy parameters, a convenient quantity is the mean-squared error (MSE) of the stochastic update with respect to the unaltered population gradient, which captures both clipping bias and DP noise. It can be written as

\begin{equation}
\begin{aligned}
\mathrm{MSE}(\theta)&=\EV\!\big[\|\widehat g_B(\theta)-g(\theta)\|_2^2\big] \\
&=
\EV\!\big[\|\bar g_B(\theta)+Z-g(\theta)\|_2^2\big]\\
&=
\EV\!\big[\|\bar g_B(\theta)-g(\theta)\|_2^2\big]
+2\EV\!\big[\langle \bar g_B(\theta)-g(\theta),Z\rangle\big]
+\EV\|Z\|_2^2\\
&=
\EV\!\big[\|\bar g_B(\theta)-g(\theta)\|_2^2\big]
+p\Big(\frac{\sigma C}{|B|}\Big)^2\\
&=
\Tr\!\big(\mathrm{Cov}[\bar g_B(\theta)]\big)
+\|\EV[\bar g_B(\theta)]-g(\theta)\|_2^2
+p\Big(\frac{\sigma C}{|B|}\Big)^2
\end{aligned}
\label{eq:mse-bias-variance}
\end{equation}
where we used that \(Z\) is independent of the minibatch and centered, so the cross term vanishes and \(\EV\|Z\|_2^2=p(\sigma C/|B|)^2\). Under i.i.d. sampling
\[
\mathrm{Cov}(\bar g_B(\theta))=\frac{1}{|B|}\,\mathrm{Cov}(\bar g_i(\theta)),
\qquad
\EV[\bar g_B(\theta)]=\EV[\bar g_i(\theta)]
\]
Hence the same identity can be written equivalently as
\begin{equation}
\mathrm{MSE}(\theta)
=
\frac{1}{|B|}\Tr\!\big(\mathrm{Cov}[\bar g_i(\theta)]\big)
+p\Big(\frac{\sigma C}{|B|}\Big)^2
+\|\EV[\bar g_i(\theta)]-g(\theta)\|_2^2
\label{eq:mse}
\end{equation}
The first term corresponds to the sampling variance which always appears under stochastic minibatch descent. The second term is the variance of the added noise and is a result of the noise addition mechanism in DP-SGD, while the third term is the squared norm of the clipping bias
\begin{equation}
b_C(\theta):=\EV\!\big[\bar g_i(\theta)-g_i(\theta)\big] = \EV[\bar g_i(\theta)]-g(\theta)
\label{eq:def-clipping-bias}
\end{equation}
where the expectation is taken over the data distribution. The concept of clipping bias is inherent to clipped stochastic gradient descent, and it measures how much clipping shifts the expected update away from the unclipped gradient field before any privacy noise is added. Note that, unlike the other terms in \eqref{eq:mse}, the clipping bias does not improve with increasing batch size. Previous literature explored the harmful effects of clipping bias, which can accumulate into a non-vanishing optimization error floor \cite{clipping-geometry,koloskova-clipbias,adaclip}. Under clipping, the stochastic optimizer converges only to a neighborhood of a stationary point, with the size of that neighborhood controlled by the bias. But the magnitude of the bias alone is not sufficient: for optimization it also matters whether clipping preserves the descent direction. This can be measured using directional alignment
\begin{equation}
A_C(\theta)
:=
\frac{\langle \bar g_B(\theta),\,g(\theta)\rangle}{\|\bar g_B(\theta)\|_2\,\|g(\theta)\|_2}
\in[-1,1]
\label{eq:directional-alignment}
\end{equation}
When \(A_C(\theta) < 1\), stochastic clipping rotates the mean update away from the true gradient direction, which is a harmful regime for optimization. When per-example clipping is minimal, we have \(\|b_C(\theta)\|_2 \approx 0\) and \(A_C(\theta) \approx 1\). However, the directional alignment can still be favorable for optimization, \(A_C(\theta)\approx 1\), even when clipping is happening aggressively, but only causing a rescaling of the expected update. As shown in ~\cite{clipping-geometry}, the compound effect of gradient clipping on directional alignment throughout training is determined by the symmetry of the per-example gradient distribution around the true population mean, that is the distance of $\xi_i$ in \eqref{eq:stoch-gradient-model} from the closest geometrically symmetric reference distribution centered at $0$. In the perfectly symmetric case, the asymmetry-induced directional misalignment caused by clipping disappears. When $\xi_i$ is only approximately symmetric, it can still affect the training in the earlier stages of optimization, even if to a lesser extent for the impact on the convergence and, therefore, eventual accuracy. As a workaround, the authors propose adding noise to the per-example gradients before clipping, which has this desirable symmetrizing effect:
\begin{equation}
\widehat{\partial_{\theta}\ell}
=
\partial_{\theta}\ell + \mathcal N(0,v)
\label{eq:grad-symmetrification}
\end{equation}

Thus, the central question in what follows is not only how much Gaussian noise DP-SGD adds, but also how much clipping distorts the stochastic gradient before that noise is added. Boundedness of quantum model gradients can be favorable because it makes it easier to control the clipping bias.

\section{Results}
\label{sec:results}

We work in a supervised $K$-class classification setting with features $x\in\mathbb{R}^n$ and labels $y \in\{1,\dots,K\}$. In the quantum pipeline, encoding produces a state $\rho(x)=|\phi(x)\rangle\langle\phi(x)|$. For an entry $(x, y)$ a parameterized circuit
\[
U(\theta)=\prod_{j=1}^{p} e^{-i\theta_j H_j}
\]
with parameter vector $\theta\in\mathbb{R}^{p}$ acts on $\rho(x)$, and bounded observables $O_y$ are measured to define the logits
\begin{equation}
f_y(x,\theta)=\Tr\!\big(O_y\,U(\theta)\,\rho(x)\,U(\theta)^\dagger\big)
\label{eq:objective-func}
\end{equation}
For a temperature $\tau>0$, define the temperature-scaled softmax probabilities
\[
p_y(x,\theta)=\mathrm{softmax}_\tau(f_y(x,\theta)) = \frac{\exp(f_y(x,\theta)/\tau)}{\sum_{k=1}^{K}\exp(f_k(x,\theta)/\tau)},
\]
and the cross-entropy loss
\[ 
\ell(f(x,\theta), y)=-\log p_y(x,\theta)
\]

Rewriting the cross-entropy loss as
\[
\ell(f(x,\theta), y)
=
-\frac{f_y(x,\theta)}{\tau}
+\log\!\left(\sum_{r=1}^{K}e^{f_r(x,\theta)/\tau}\right)
\]
and differentiating with respect to $\theta_j$ gives
\[
\frac{\partial \ell}{\partial\theta_j}
=
\frac{1}{\tau}\left(\frac{\sum_{k=1}^{K}e^{f_k/\tau}\,\frac{\partial f_k}{\partial\theta_j}}{\sum_{r=1}^{K}e^{f_r/\tau}} - \frac{\partial f_y}{\partial\theta_j}\right)
\]
which simplifies to the closed-form loss-objective gradient relation
\begin{equation}
\frac{\partial \ell}{\partial\theta_j}
=
\frac{1}{\tau}\left(\sum_{k=1}^{K} p_k(x,\theta)\,\frac{\partial f_k(x,\theta)}{\partial\theta_j}
-
\frac{\partial f_y(x,\theta)}{\partial\theta_j}\right),
\label{eq:obj-to-loss-grad}
\end{equation}
which we will refer to in what follows.

In our quantum models noiseless gradients are computed using backpropagation, while simulation of shot noise necessitates the application of parameter-shift rules \cite{Wierichs2022generalparameter} instead. In the classical baseline we assume the same loss and dataset, a comparable number of trainable parameters, and hyperparameters tuned separately for each setting. Differential privacy is then enforced using standard DP-SGD with subsampled R\'enyi accounting.

Below we proceed with an analysis of how shot noise and hardware noise propagate to loss gradients. This reveals why such quantum noise can look superficially similar to the Gaussian perturbation used in DP-SGD, but also why that resemblance is incomplete from a privacy perspective. We then introduce the structural ingredient that DP-SGD actually needs in the hybrid setting: a deterministic clipping bias bound for the loss gradients used in training. We finally return to the original noise question and explain why native quantum noise is not a satisfactory replacement for the formal privacy mechanism.

\subsection{Noise propagation}

There are several sources of randomness in private optimization that should be distinguished. Stochastic mini-batching affects per-example gradients before clipping and, unlike the Gaussian perturbation added after averaging over a batch, does not itself provide differential privacy. In addition to the artificial DP-SGD noise, quantum implementations also involve additional sources like shot noise and hardware noise. The purpose of this subsection is to understand how those quantum sources of randomness propagate to the loss gradients, and to clarify why they should be treated as secondary to the formal DP mechanism.

The idea of using quantum noise as a resource rather than a nuisance has been explored previously. In the privacy context, recent work has shown that quantum noise can protect quantum classifiers by providing adversarial robustness and inference-stage privacy \cite{quantum-adversarial-robustness}. Those guarantees, however, are formulated for neighboring quantum inputs or query-level perturbations, not for the dataset-level adjacency relevant to training-time privacy against membership inference. Thus, it is useful to propagate the effect of quantum noise to the end of the training iteration, in order to understand how it affects training-time privacy and where  the analogy with DP-SGD breaks down.

\begin{restatable}[Quantum noise propagation to loss gradients]{theorem}{noiseprop}
\label{theorem:noise-prop}
Consider a \(K\)-class quantum classification model with logits
\[
f_k(x,\theta)=\Tr\!\big(O_k\,\rho(\theta)\big)
\]
for traceless Pauli observables \(O_k\), and cross-entropy loss \(\ell\). Under global depolarizing noise
\[
\mathcal E_\lambda(\rho)=(1-\lambda)\rho+\lambda\,\Id/d
\]
and finite-shot estimation with shot count \(S\), the induced perturbation of the loss gradient can be approximated as additive Gaussian noise:
\begin{equation}
\widehat{\partial_{\theta}\ell}
\approx
\partial_{\theta}\ell + \mathcal N(m,v),
\label{eq:gradient-plus-noise}
\end{equation}
where the mean \(m\) and variance \(v\) are computable via \ref{eq:m_and_v} from the logits, their gradients, the noise strength \(\lambda\), and the shot count \(S\).
\end{restatable}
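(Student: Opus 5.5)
The plan is to propagate the two noise sources through the closed-form relation \eqref{eq:obj-to-loss-grad} in two stages. First we derive a noise model for the logits and their gradients. Then we linearize the softmax-weighted combination to first order (delta method).

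\textbf{Depolarizing noise.} Since each $O_k$ is a traceless Pauli, $\Tr(O_k\,\Id/d)=0$. Hence $\Tr(O_k\,\mathcal E_\lambda(\rho(\theta)))=(1-\lambda)f_k$. Global depolarizing noise commutes with the parameterized gates, and we assume it is applied after them. Therefore the noisy gradients are also rescaled: $\partial_{\theta_j}\tilde f_k=(1-\lambda)\partial_{\theta_j}f_k$. This perturbation is deterministic. Substituting $\tilde f$ into \eqref{eq:obj-to-loss-grad} and expanding $p_k((1-\lambda)f)$ to first order in $\lambda$ (or keeping it exact) gives a deterministic shift
\[
m_j=\tfrac1\tau\Big(\sum_k \tilde p_k(1-\lambda)\partial_j f_k-(1-\lambda)\partial_j f_y\Big)-\partial_{\theta_j}\ell .
\]
Here $\tilde p_k$ is the softmax of $(1-\lambda)f/\tau$. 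This is the mean $m$ in \eqref{eq:gradient-plus-noise}.

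\textbf{Shot noise.} Pauli outcomes are $\pm1$. An $S$-shot estimate $\hat f_k$ therefore has mean $\tilde f_k$ and variance $(1-\tilde f_k^2)/S$, and by the CLT it is approximately Gaussian. Gradients are estimated by the parameter-shift rule
\[
\widehat{\partial_j f_k}=\tfrac12\big(\hat f_k(\theta+\tfrac\pi2 e_j)-\hat f_k(\theta-\tfrac\pi2 e_j)\big),
\]
which follows from \cite{Wierichs2022generalparameter} for Pauli generators. Its variance is
\[
\tfrac1{4S}\big(2-\tilde f_k(\theta^+)^2-\tilde f_k(\theta^-)^2\big).
\]
The shifted circuits are run independently of the unshifted one, so the estimates $\hat f$ and $\widehat{\partial f}$ are independent.

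\textbf{Propagation to the loss gradient.} We write the estimated gradient as $G_j(\hat f,\widehat{\partial f})$, with $G_j$ the right-hand side of \eqref{eq:obj-to-loss-grad}, and Taylor expand around the depolarized means. The linear term in the $\widehat{\partial_j f_k}$ fluctuations has coefficients $(\tilde p_k-\delta_{ky})/\tau$. The linear term in the $\hat f_r$ fluctuations uses $\partial p_k/\partial f_r=p_k(\delta_{kr}-p_r)/\tau$, giving coefficients
\[
\tfrac1{\tau^2}\,\tilde p_r\Big((1-\lambda)\partial_j f_r-\sum_k\tilde p_k(1-\lambda)\partial_j f_k\Big).
\]
A linear combination of independent approximately Gaussian variables is Gaussian, so $v_j$ is the sum of the squared coefficients times the variances above. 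These are the expressions we collect as \eqref{eq:m_and_v}. The mean correction from shot noise is $O(1/S)$ at second order and is absorbed into the approximation.

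\textbf{Main obstacle.} The delicate step is justifying the Gaussian/first-order approximation. Two issues arise: the nonlinearity of the softmax, and the covariances across parameter components $j$, which share the same $\hat f_r$. We will state $v$ as a full covariance matrix, or its diagonal if independent estimates are used per component. We will bound the remainder by $O(S^{-1})$, using boundedness $|f_k|\le1$ and the $1/\tau$-Lipschitz softmax. We will also note that the approximation requires $S\gg1$.
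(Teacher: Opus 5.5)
Your proposal is correct. It uses the same ingredients as the paper:
\begin{itemize}
\item the $\pm1$-outcome variances $(1-\mu^2)/S$ and $\frac{1}{4S}(2-\mu_+^2-\mu_-^2)$ for the two-shift parameter-shift estimator;
\item the $(1-\lambda)$ rescaling for traceless Paulis;
\item independence of the shifted and unshifted evaluations;
\item a first-order delta-method expansion of the softmax, whose coefficients are the paper's $a/\tau$ with $a=p-e_y$ and $w^{(j)}/\tau^2$ with $w^{(j)}=\tau J_p(f)g^{(j)}$.
\end{itemize}

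The difference is in the decomposition. The paper writes $\widehat f_\lambda=f+\xi$ and $\widehat g_{j,\lambda}=g_j+\zeta_j$ as non-centred perturbations with means $-\lambda f$ and $-\lambda g_j$. It then linearizes once around the noiseless point. So its mean $m=-\frac{\lambda}{\tau}a^\top g^{(j)}-\frac{\lambda}{\tau^2}(w^{(j)})^\top f$ is exactly the first-order-in-$\lambda$ expansion of your shift $m_j$. Its variance coefficients are also evaluated at $p,g$ rather than at $\tilde p,(1-\lambda)g$.

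You instead treat depolarization exactly; in fact the depolarized loss gradient is just the ideal one at effective temperature $\tau/(1-\lambda)$. You then linearize only the centred shot noise around the depolarized point. This makes your expansion valid whenever $S^{-1/2}/\tau$ is small, without requiring small $\lambda$. The paper's single expansion gives a simpler closed form, linear in $\lambda$, but as the paper itself remarks it needs $\lambda$ small as well. The two agree to first order in $\lambda$ and $S^{-1/2}$. Your remark on cross-coordinate covariances (all $j$ share the same $\widehat f$) is more careful than the paper's per-coordinate statement.

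One caveat: writing $v_j$ as a sum of squared coefficients times individual variances assumes the $K$ class estimates come from independent shot sets. Under joint readout of commuting observables, such as simultaneous Pauli-$Z$ measurements, you need the full covariance. That covariance involves correlators $\Tr(O_kO_r\,\mathcal E_\lambda(\rho(\theta)))$, which are not fixed by the logits. The paper avoids committing to this by leaving $v$ in terms of $\Var(a^\top\zeta^{(j)})$ and $\Var((w^{(j)})^\top\xi)$.
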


The proof with the exact expressions for \(m\) and \(v\) can be found in Appendix \ref{app:proofs}. Theorem~\ref{theorem:noise-prop} shows that shot noise and hardware noise can, after propagation through the loss, approximately resemble an additive Gaussian perturbation on the gradients. The approximation is reasonable when \(\|\xi\|/\tau\) is not too large, which is the case for small $\lambda$ and large $S$ or $\tau$. However, this resemblance is not enough for a DP-SGD-style training guarantee.

Without an appropriate bound on $|\partial_{\theta}\ell|$, the sensitivity of gradient queries remains unbounded, necessitating clipping. In this case, note how when only shot noise is considered in the absence of depolarizing noise, we have $m = 0$ and \eqref{eq:gradient-plus-noise} is identical to \eqref{eq:grad-symmetrification}. As discussed, gradients of this form have nice geometric properties, which can reduce the clipping bias throughout the training and improve the utility under identical privacy budgets, by symmetrizing the pre-clipping gradient distribution so that the clipping step deforms the expected update less severely. Nevertheless, improvements from such a technique are expected to be marginal under reasonable assumptions on the scale of hardware noise and they can be easily simulated using a classical mechanism, as described in \cite{clipping-geometry}.

Alternatively, in order to obtain a formal privacy guarantee, one would still need a deterministic control of gradient sensitivity, that is a bound on $\partial_{\theta}\ell$ in \eqref{eq:gradient-plus-noise}. This motivates the following result.

\subsection{Clipping bias bound}

We now turn to the structural property of quantum models that makes their training under DP-SGD more favorable. Unless indicated otherwise by a subscript the norms are operator norms and we will use $(\cdot)_+ = \max\{0, \cdot\}$ for better readability.

\begin{restatable}[Deterministic clipping-bias bound]{theorem}{clipbound}
\label{theorem:grad-bound}
Consider the variational classifier of Definition~\ref{def:hybrid-classifier} with $K$ class observables $\{O_k\}_{k=1}^{K}$, parameterised unitary \(U(\theta)=\prod_{j=1}^{p}e^{-i\theta_j H_j}\), and cross-entropy loss at temperature $\tau$. Set
\[
G \;:=\; 2\,\max_{1\le k\le K}\|O_k\|\,\left(\sum_{j=1}^{p}\|H_j\|^2\right)^{1/2}
\]
Then for every data point $(x_i, y_i)$ in the dataset the per-example loss gradient $g_i(\theta)=\partial_\theta\ell(f(x_i,\theta), y_i)$ satisfies
\begin{equation}
\|g_i(\theta)\|_2 \;\le\; \frac{2G}{\tau}
\label{eq:loss-grad-bound}
\end{equation}
In particular, for every clipping threshold $C>0$, the clipping bias $b_C(\theta)$ defined in~\eqref{eq:def-clipping-bias} satisfies
\begin{equation}
\|b_C(\theta)\|_2 \;\le\; \left(\frac{2G}{\tau} - C\right)_+
\label{eq:batch-clipping-bound}
\end{equation}
\end{restatable}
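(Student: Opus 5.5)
The plan has three stages. We first bound the logit gradients. We then transfer that bound to the loss gradient through the relation \eqref{eq:obj-to-loss-grad}. Finally, we convert the resulting deterministic norm bound into a bound on the clipping bias.

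\textbf{Logit gradients.} Fix $x$ and write $U(\theta)=U_{>j}\,e^{-i\theta_j H_j}\,U_{<j}$. Differentiating $f_k=\Tr(O_k U\rho U^\dagger)$ with respect to $\theta_j$ gives
\[
\partial_{\theta_j} f_k = \Tr\!\big(\tilde O_k\, i[H_j,\tilde\rho_j]\big),
\]
where $\tilde O_k=U_{>j}'^{\dagger}O_kU'_{>j}$ is a unitary conjugate of $O_k$ and $\tilde\rho_j$ is the state after the first $j$ factors. This can equivalently be written as $\Tr\big(i[\tilde O_k,H_j]\tilde\rho_j\big)$. Since $\tilde\rho_j$ is a density matrix, the trace is at most the operator norm of the commutator, so
\[
|\partial_{\theta_j}f_k|\le\|[\tilde O_k,H_j]\|\le 2\|O_k\|\|H_j\|.
\]
Unitary invariance of the operator norm is what removes $\theta$ and $x$ from the bound. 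Summing the squares over $j$ gives $\|\nabla_\theta f_k\|_2\le 2\|O_k\|(\sum_j\|H_j\|^2)^{1/2}\le G$ for every $k$.

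\textbf{Loss gradient.} By \eqref{eq:obj-to-loss-grad},
\[
g_i=\frac{1}{\tau}\Big(\sum_k p_k\nabla f_k-\nabla f_y\Big).
\]
The first term is a convex combination of vectors of norm at most $G$, because $p_k\ge0$ and $\sum_k p_k=1$, so its norm is at most $G$. The triangle inequality then gives $\|g_i\|_2\le 2G/\tau$, which is \eqref{eq:loss-grad-bound}. A sharper constant is possible by writing the sum as $\sum_{k\ne y}p_k(\nabla f_k-\nabla f_y)$, but the crude bound is what the statement asks for. For the hybrid map $h_\phi$ of Definition~\ref{def:hybrid-classifier} I would restrict to the identity case, as the statement implicitly does.

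\textbf{Clipping bias.} By \eqref{eq:clipping}, $\bar g_i-g_i=-g_i\big(1-\min\{1,C/\|g_i\|_2\}\big)$. Its norm is $(\|g_i\|_2-C)_+$, which is at most $(2G/\tau-C)_+$ because $t\mapsto(t-C)_+$ is monotone. Jensen's inequality (convexity of the norm) gives
\[
\|b_C\|_2=\|\EV[\bar g_i-g_i]\|_2\le\EV\|\bar g_i-g_i\|_2\le(2G/\tau-C)_+,
\]
which is \eqref{eq:batch-clipping-bound}. As a corollary, the bias vanishes identically once $C\ge 2G/\tau$.

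The main obstacle is the commutator step: getting the derivative into commutator form with the correct conjugations, and justifying $|\Tr(A\rho)|\le\|A\|$ for the anti-Hermitian-times-$i$ operator. The remaining steps are elementary norm manipulations.
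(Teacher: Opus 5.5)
Your proposal is correct and follows the paper's proof essentially step for step. You use the commutator form of $\partial_{\theta_j} f_k$, bounded by $2\|H_j\|\,\|O_k\|$ via unitary invariance and $\|\rho\|_1=1$, then the triangle inequality with $\sum_k p_k=1$ in the softmax--cross-entropy relation, and finally Jensen's inequality with $\|\Pi_C(g)-g\|_2=(\|g\|_2-C)_+$. The only differences are cosmetic: you conjugate the observable rather than the generator (the paper's last equality does the same), and your derivative carries an overall factor $+i$ where it should be $-i$, which does not affect the bound.
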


Proof is postponed to Appendix \ref{app:proofs}. Theorem~\ref{theorem:grad-bound} shows that quantum models may retain more utility under DP-SGD because they admit deterministic structural control of clipping bias. The bound is deterministic, model-level, and computable directly from the circuit description. No assumption on empirical gradient tails is needed. The bound is also uniform over the encoded input: the proof uses only that the encoded datum enters as a quantum state, so the same bound holds for amplitude, angle, basis, or data-reuploading encodings, provided the trainable parameters appear only in the variational unitary layers. It is likewise independent of the gradient computation method, as long as it is exact, since this structural bound only depends on the gradient itself.

The theorem provides a direct route from the model structure to the actual DP-SGD update used in training: the object clipped by the optimizer is the per-example loss gradient $g_i(\theta)$, and its distortion is controlled explicitly by the circuit generators and observable norms. In particular, for $C = 2G/\tau$ clipping bias vanishes, so once a deterministic gradient bound is available, clipping can be tuned structurally rather than only heuristically. Smaller clipping thresholds can then be used without inducing a large clipping bias, and this directly lowers the scale \(\sigma C\) of the Gaussian noise added by DP-SGD. In practice one wants observables \(O_k\) and generators \(H_j\) to be local or low-weight, and circuits to remain sufficiently shallow, so that the norms entering \eqref{eq:main-grad-bound} do not scale unfavorably. If one instead uses highly nonlocal observables, large-support generators, or overly deep circuits, the worst-case constant can become loose enough to be uninformative for clipping calibration even though the theorem remains correct. Thus, locality is not needed for validity, but it is important for tightness and therefore for practical usefulness in private training.

Notably, this locality constraint also overlaps with the trainable regime for the models considered here. As discussed previously, models with non-local generators or observables suffer from barren plateaus and in this case the main problem is no longer the clipping bias but the vanishing gradient signal. In the trainable setting, the model also remains classically simulable, which is however not a problem by itself, as the comparison criterion is the accuracy loss under clipped (and therefore private) training, rather than expressivity or computational efficiency. This is related in spirit to Lipschitz-constrained architectures, which trade clipping bias for bias in model space by inducing gradient control as an architectural constraint built into the model class \cite{bethune2024dpsgd}. For quantum models the control is inherited from unitary evolution together with bounded expectation values.

We can now circle back to the original motivation. Theorem~\ref{theorem:noise-prop} shows that, after propagation through the loss, shot noise and hardware noise can appear as an additive Gaussian perturbation on gradient coordinates. However, this is not enough to replace the calibrated Gaussian mechanism of DP-SGD. The first obstacle is that the variance terms \(\sigma_\xi^2\), \(\sigma_{\zeta,j}^2\), and therefore the total variance \(v\), are data dependent. Differential privacy requires noise calibration that is controlled independently of the particular training example, whereas the shot-noise variance can collapse near the extreme expectation values \(\pm 1\). Introducing additional depolarizing noise can prevent the variance from vanishing at the extremes, but at the price of extra gradient bias and potentially worse trainability. The second obstacle is that in its most general form the bound in Theorem~\ref{theorem:grad-bound} can be too loose to yield competitive privacy budgets when inserted into a Gaussian-mechanism analysis. Trying to lower the budget by decreasing the number of shots also increases noise and hurts trainability or expressivity \cite{noise-in-qml,Mele2026,Larocca2025}, and can invalidate the assumption needed for Gaussian approximation.

For these reasons, using native quantum noise alone to derive record-level privacy guarantees leads to bounds that are too loose for practical training scenarios. Quantum noise can still act as an additional source of randomness and may therefore help against empirical privacy attacks, but the formal privacy guarantee should continue to come from classical DP-SGD.

\section{Numerical simulations}
\label{sec:experiments}

The setup for our numerical simulations is directly affected by the analytical insights from the previous sections. Since we are interested in comparing quantum and classical models, it is important to set up experiments that allow for a fair comparison while taking into account the metrics we have discussed. The experiments are organized to test the clipping bias explanation by asking two questions: First, at a fixed clipping threshold, do quantum models exhibit lower clipping frequency and reduced clipping bias than matched classical baselines? Second, under actual private training with the same DP-SGD mechanism, do they retain more utility?

Due to the differences between quantum and classical models, one needs to define equivalence between them to enable a fair comparison.

\begin{definition}[Equivalence between quantum and classical models (informal)]
Quantum and classical models are equivalent if they contain approximately the same number of parameters $p$ and, under the same optimizer, fine-tuned to each model separately, achieve similar accuracy on the same task.
\end{definition}

\noindent Note that this differs from most approaches in the literature in that accuracy is not a consequence of model dimension. Rather, since we are interested in the performance drop of the models when going from non-private to private settings, we design them to have equivalent performance at the same number of parameters when privacy-enhancing techniques are not employed. This practical definition of equivalence allows for comparison of models in real use-case settings.

The important nuance is that the number of parameters can affect the convergence rates so it has to stay the same between classical and quantum models. Since DP-SGD privatizes a gradient vector in \(\mathbb{R}^p\), the injected Gaussian perturbation is also \(p\)-dimensional; hence, while the per-coordinate variance is fixed by \((\sigma,C,B)\), the total noise scales as \(\EV[\|\xi_t\|_2^2]=p\,\sigma^2C^2/|B|^2\), so \(\EV[\|\xi_t\|_2]\approx (\sigma C/|B|)\sqrt{p}\), degrading the update MSE \eqref{eq:mse} as \(p\) grows. Moreover, under a fixed privacy budget \((\varepsilon,\delta)\) and sampling rate \(q=|B|/N\), composition limits the number of steps \(T\); larger models often require larger \(T\) to train, which forces either larger \(\sigma\) (more noise per step) or early stopping. The parameter count also affects DP training through clipping bias. Clipping acts on the full per-example gradient vector, and the clipping probability \(\Pr[\|g_i\|_2>C]\) typically increases with dimension at a fixed threshold \(C\), since if coordinates have comparable scale then \(\|g_i\|_2\) tends to grow like \(\sqrt{p}\). Hence larger \(p\) can lead to a higher fraction of clipped gradients and a larger bias in the direction and magnitude of the update, unless \(C\) is increased.

Fixing the number of parameters and the optimization routine, including the number of iterations, allows us to concentrate on a few tunable hyperparameters. When switching to DP training, it is important to ensure that the additional hyperparameters involved in DP-SGD are properly tuned across the models.

The clipping-only study shows that under these equivalence conditions, quantum models preserve bounded gradients and reach higher test accuracy with smaller clipping thresholds. This in turn affects the amount of necessary noise in privacy preserving simulations, allowing for better accuracy under a fixed privacy budget, and therefore a more optimal privacy-utility tradeoff. Finally, we employ private training in a quantum image loading and classification pipeline, showing how it can outperform an equivalent convolutional neural network. All the implementation details and results discussed in this section can be found in the companion Github repository \cite{sedrakyan2026privateqml}.

\subsection{Effect of Clipping}

\begin{figure}
    \centering
    \includegraphics[width=1\linewidth]{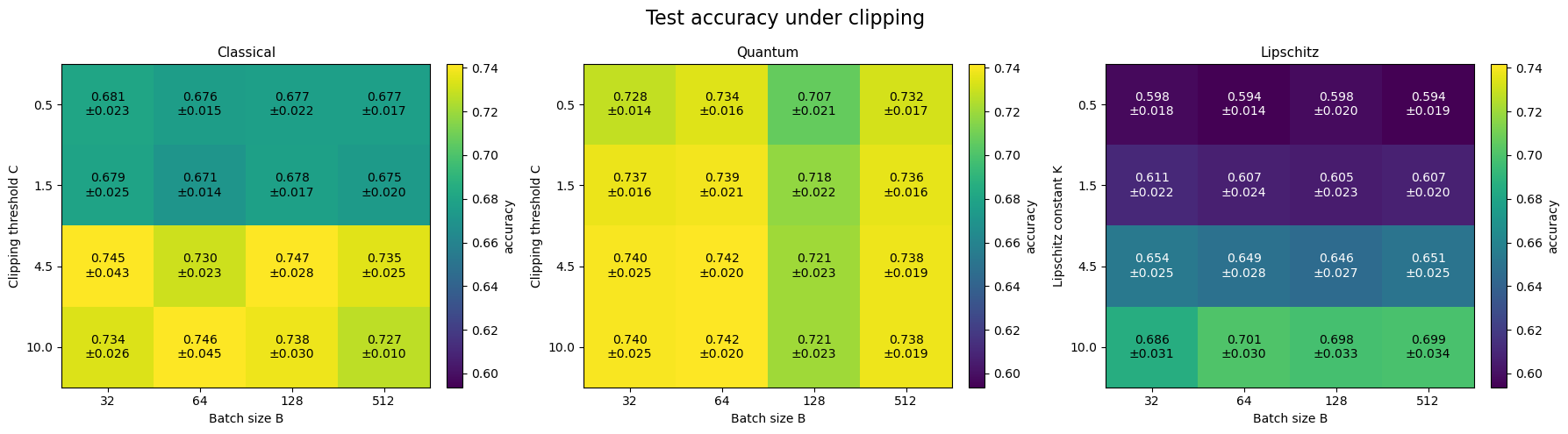}
    \caption{Final achieved test accuracy depending on the clipping threshold $C$ and the batch size $|B|$. For each configuration the learning rate is fine-tuned to ensure stable training with the highest achievable accuracy in $100$ training epochs. With appropriate tuning of the learning rate, batch size does not noticeably impact the accuracy. However, the bias floor causes the classical models with larger gradients to struggle under smaller clipping threshold $C$.}
    \label{fig:synthetic-clipped-heatmap}
\end{figure}

In order to understand the detrimental effect of clipping, we first consider experiments where privacy is not enforced. This means that the Gaussian noise in DP-SGD is not added, which allows us to isolate the effects of clipping. Such a setting has no practical meaning of its own, but it enables a cleaner numerical study of clipping effects. For the same reason we choose a synthetic tabular classification task with $N = 2048$ samples, $K = 4$ classes and $16$ features. Having strict control over the classification dataset lets us avoid outliers that could bias the results. The dataset is not perfectly separable, as some of the labels are randomly flipped to ensure non-triviality of the task.

As a classical model we take a neural network with several fully connected layers. To maximize the performance of the classifier under private training, we choose $\tanh$ functions as the activation for each layer. As discussed in \cite{tanh-dp}, the family of tempered sigmoid activations, and in particular $\tanh$, tends to outperform unbounded activations such as ReLU in private training context. The quantum classifier ansatz is built using amplitude encoding and $L$ layers of $R_Z R_Y R_Z$ followed by a CNOT ring, with Pauli-$Z$ expectation read-outs on the first $K$ qubits similar to Figure~\ref{fig:dpml-pipeline}. We also include a comparison with classical Lipschitz neural networks, structured similarly to classical neural networks, but utilizing gradient norm preserving (GNP) layers instead with Lipschitz constant matching the clipping threshold $C$. As discussed previously, Lipschitz neural networks avoid clipping bias altogether, but they induce bias in model space, limiting the expressivity of the model.

We sweep across several values of the clipping threshold $C$ and the batch size $|B|$, and for every configuration we perform $10$ experiments for $100$ training epochs with distinct random seeds on parameter initialization, reporting mean test accuracy $\pm$ one std.\ across the seeds. The results are shown in Figure~\ref{fig:synthetic-clipped-heatmap}. As can be seen, the batch size does not significantly affect training since the optimizer is tuned for each configuration separately. Moreover, the models achieve similar accuracy at a large clipping threshold of $C = 10$, which corresponds to unclipped or minimally clipped training (for Lipschitz models this corresponds to less restrictive models with more expressivity). However, the classical models clearly struggle for clipping thresholds $C \leq 1.5$, indicating detrimental effects of clipping bias.

\begin{figure}
    \centering
    \includegraphics[width=0.46\linewidth]{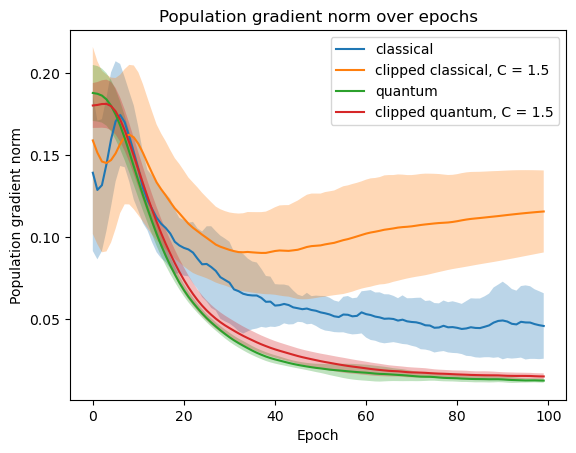}
    \includegraphics[width=0.485\linewidth]{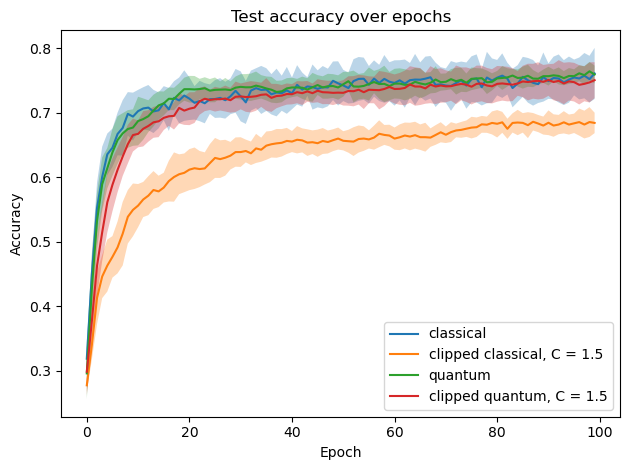}
    \caption{Evolution of population gradient and test accuracy for classifiers on synthetic data}
    \label{fig:synthetic_gradnorm_and_accuracy}
\end{figure}

\begin{figure}
    \centering
    \includegraphics[width=0.455\linewidth]{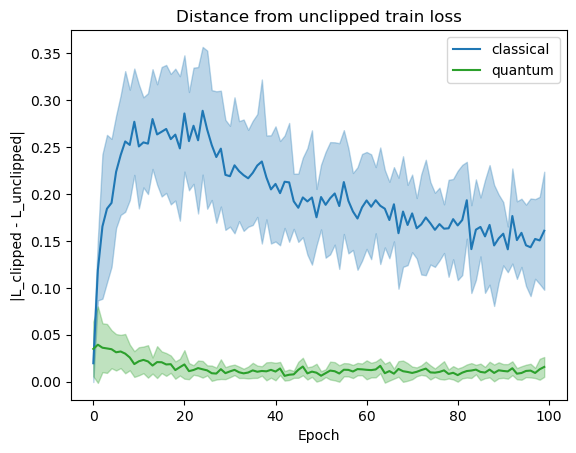}
    \includegraphics[width=0.45\linewidth]{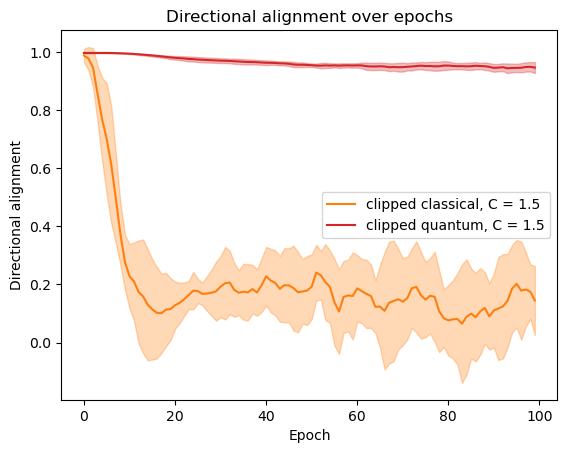}
    \caption{Evolution of distance from the unclipped loss $|\mathcal{L}_{\text{clipped}} - \mathcal{L}_{\text{unclipped}}|$ and directional alignment for classifiers on synthetic data}
    \label{fig:synthetic_deltaL_and_diralign}
\end{figure}

To better understand this phenomenon, we dive deeper into the training configuration where classical and quantum models diverge, with $C = 1.5$ and $|B| = 64$ for quantum and classical models only (since Lipschitz networks do not suffer from clipping bias). We benchmark four models: classical, quantum, clipped classical and clipped quantum. After each epoch, in addition to model accuracy and distance from the unclipped loss, we compute a diagnostic snapshot of the gradient distribution (for clipped models we use the pre-clipping distribution at the end of every epoch). Based on the complete per-example gradient distribution, we compute and report several metrics. First, we analyse the true population gradient $\|g\| = \|\partial_\theta \mathcal{L}\|$, which shrinks monotonically as expected, indicating convergence in Figure \ref{fig:synthetic_gradnorm_and_accuracy}. The only exception is the clipped classical model, for which the population gradient decreases initially but eventually plateaus. At the same time, the per-example gradient does not necessarily decrease. Looking at the mean and maximum per-example gradient norms $\EV_i\|g_i\|$ and $\max_i\|g_i\|$ in Figure \ref{fig:synthetic_mean_max_norm}, it becomes clear that a clipping threshold of $C = 1.5$ is optimal for quantum models on average and worst-case, but it is too small for classical ones, which results in a large clipping probability $p_{\text{clip}} = \Pr[\|g_i\| > C]$ (Figure \ref{fig:synthetic_biasnorm_and_clipprob}) and increased clipping bias preventing convergence to the optimum.  Importantly, clipping here significantly affects the directional alignment in Figure \ref{fig:synthetic_deltaL_and_diralign}, which prevents the optimizer from converging. Together these results indicate that quantum model performs better due to the reduced clipping bias induced by bounded gradients.

\begin{figure}
    \centering
    \includegraphics[width=0.45\linewidth]{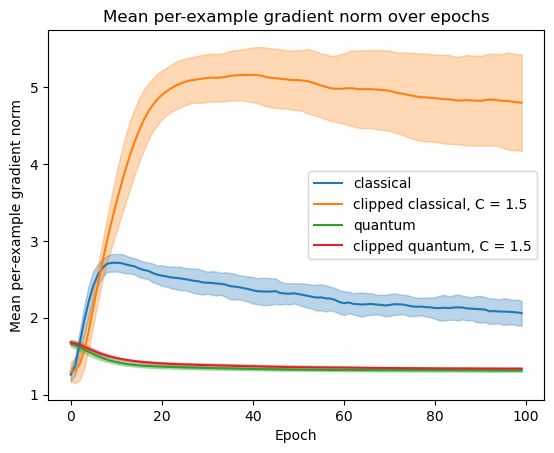}
    \includegraphics[width=0.46\linewidth]{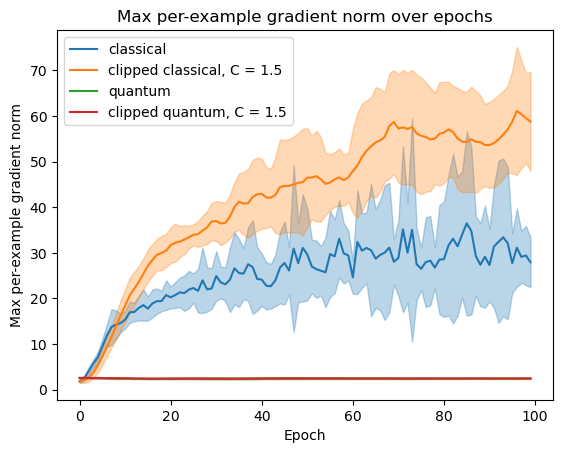}
    \caption{Evolution of mean and maximum gradient norm for classifiers on synthetic data}
    \label{fig:synthetic_mean_max_norm}
\end{figure}

\begin{figure}
    \centering
    \includegraphics[width=0.45\linewidth]{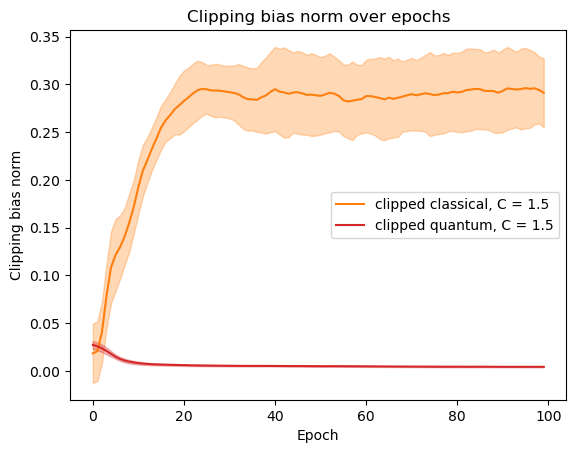}
    \includegraphics[width=0.45\linewidth]{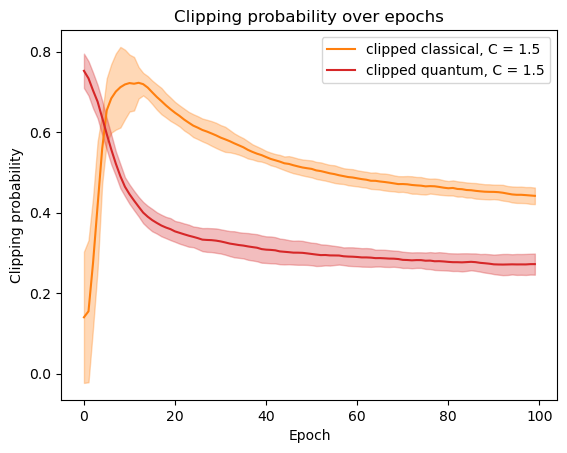}
    \caption{Evolution of clipping bias norm and clipping probability for classifiers on synthetic data}
     \label{fig:synthetic_biasnorm_and_clipprob}
\end{figure}

\subsection{Private training}

\begin{figure}
    \centering
    \includegraphics[width=1\linewidth]{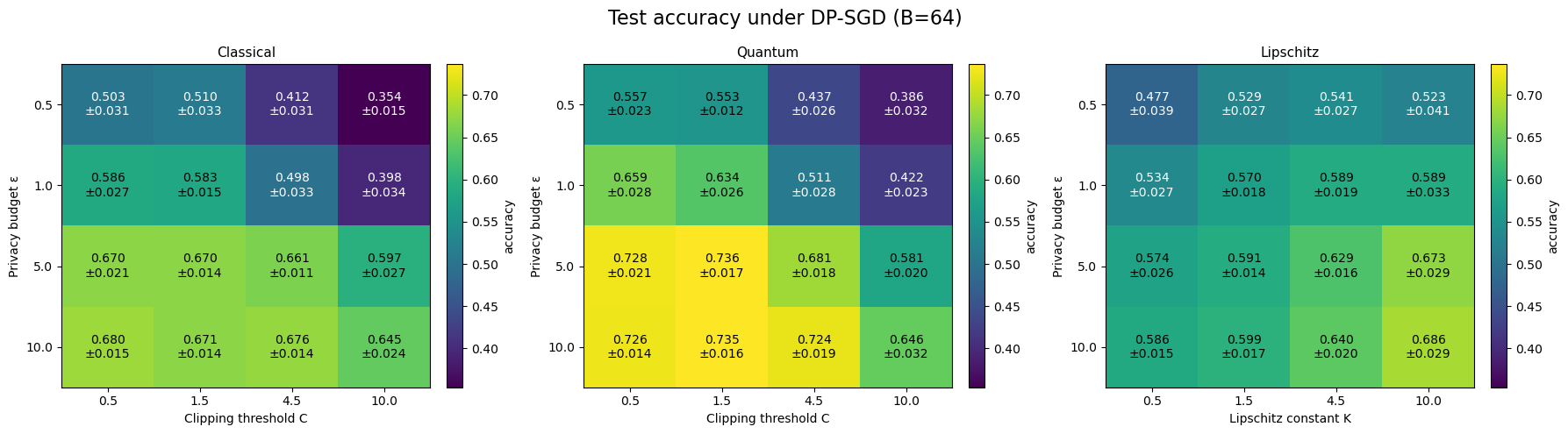}
    \caption{Final achieved test accuracy depending on the clipping threshold $C$ and the privacy budget $\varepsilon$ for batch size $|B| = 64$. For each configuration the learning rate is fine-tuned to ensure stable training with the highest achievable accuracy in $100$ training epochs. Smaller clipping thresholds reduce the added noise scale but induce clipping bias, and the impact of this double effect is visible in the figure. The optimal results are achieved when the clipping threshold tightly bounds the gradient without clipping too much, so that the scale of the added noise is reduced to a minimum. For the quantum model $C = 1.5$ ensures the best results, beating the best classical equivalent by around $5\%$ increase in accuracy. As expected, the accuracy drops regardless of the clipping threshold under tighter privacy budgets when more calibrated noise is being added. Note that the colorscheme matches that of Figure \ref{fig:synthetic-clipped-heatmap}, but the colorscale range differs to better highlight the differences.}
    \label{fig:synthetic-private-heatmap}
\end{figure}

\begin{figure}
    \centering
    \includegraphics[width=0.47\linewidth]{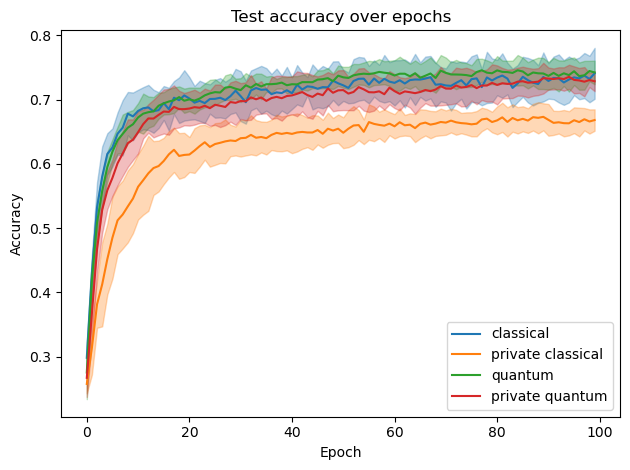}
    \includegraphics[width=0.45\linewidth]{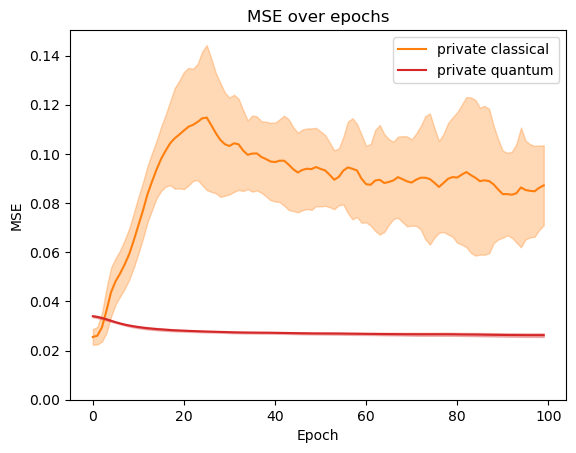}
    \caption{Evolution of accuracy and MSE for private classifiers on synthetic data with clipping threshold $C = 1.5$ and practical privacy budget of $\varepsilon = 10$  and $\delta = 10^{-5}$. MSE aggregates the effects of both clipping and calibrated noise on gradient updates. As can be seen the private classical model underperforms due to the higher MSE of gradient updates.}
    \label{fig:synthetic-private-accuracy-and-mse}
\end{figure}

Naturally, we also consider more practical settings where privacy is enforced with noise. Extending the results from the previous section, we perform a further analysis for models trained under a fixed privacy budget. Again sweeping across the clipping threshold and privacy budget values and reporting mean $\pm$ one standard deviation over 10 random initializations, we see a similar pattern here: quantum models perform better in general with smaller clipping thresholds, as shown in Figure~\ref{fig:synthetic-private-heatmap}. The reason is that small $C$ also reduces the amount of noise $\sigma C$ needed at every iteration to comply with a fixed privacy budget. We take a deeper look at this phenomenon for $C = 1.5$ and $\varepsilon = 10$ and compute the most relevant quantity that captures the effects of both clipping and Gaussian noise -- MSE~\eqref{eq:mse}. We plot its evolution throughout training alongside test accuracy in Figure~\ref{fig:synthetic-private-accuracy-and-mse}.

Finally, we apply these results to an image classification pipeline on a real dataset. To test the performance under private training on complex images, we take images from the Honda Scenes dataset \cite{narayanan2019dynamic}, which is a labeled dataset designed for dynamic scene classification. This dataset features time-based annotations detailing various aspects such as road locations, surrounding environment, weather conditions, and the state of the road surface. We sample 2000 images and classify between four classes corresponding to weather conditions and time of day: clear day, clear evening, snowy day, and snowy evening. We scale down the dimensions of images to $64\times96$ and convert them to gray-scale for simplicity. The classical network is a convolutional neural network with max-pooling and convolution layers, along with GroupNorm layers. 

Quantum image classifiers are based on a previously proposed loading and classification pipeline described in \cite{Gharibyan2026}, where there is no classical feature extraction before the quantum processing. As a first step in this pipeline quantum representations of images are learned using a technique known as hierarchical learning, which first learns coarse grained representations of images with a lower resolution, gradually scaling them up and learning more fine grained representations. Once trained, the loader parameters are fixed and loaders are then merged with the variational classifier, which outputs Pauli-$Z$ expectation values corresponding to image classes. The variational ansatz has $L$ variational layers of the same form as the one used for synthetic dataset. A schematic representation of the training pipeline is shown on Figure \ref{fig:bae-classifier}.

The equivalence conditions are ensured for both models, with each having around $500$ trainable parameters and both achieving the same accuracy under unconstrained training. As seen in Figure \ref{fig:image-classifier-results} under private training we see that the quantum classifier achieves higher accuracy and remains closer to the non-private unclipped optimization landscape, indicating the practical consequences of the bounded nature of its gradients in a private learning tasks applied to a real dataset.

\begin{figure}
    \centering
    \includegraphics[width=0.9\linewidth]{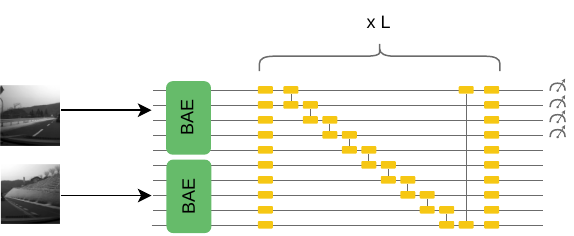}
    \caption{Quantum image loading and classification model considered here. The block amplitude encoding (BAE) image loaders are trained in advance using hierarchical training to learn representations of image blocks as quantum states. These are then plugged into the pipeline for each individual image and the $L$-layer variational classifier outputs $Z$ expectation values, which are used to train the model. Figure inspired by the ones found in \cite{Gharibyan2026}.}
    \label{fig:bae-classifier}
\end{figure}

\begin{figure}
    \centering
    \includegraphics[width=0.45\linewidth]{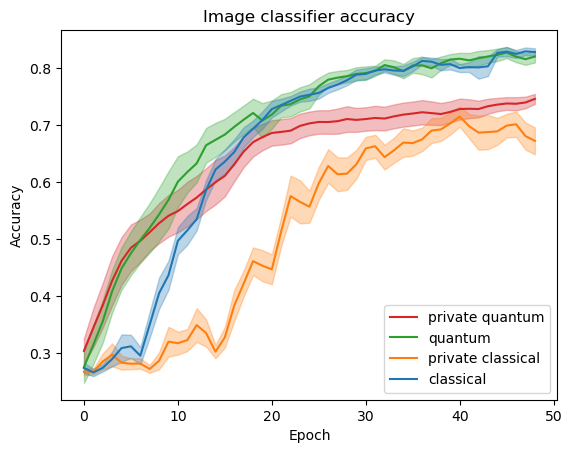} \quad
    \includegraphics[width=0.45\linewidth]{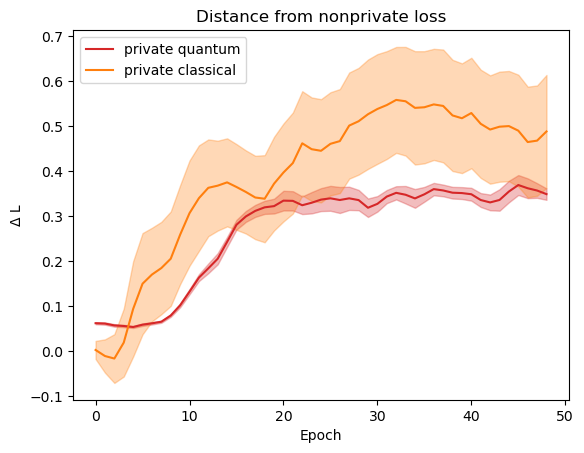}
    \caption{Evolution of accuracy and distance from the non-private loss for image classifiers under private training with $\varepsilon = 10.0$, $\delta=10^{-5}$ and $C = 1.5$. Under the clipping and noise constraints posed by the private optimizer, the quantum model achieves consistently higher classification accuracy and manages to stay closer to the unbiased optimisation landscape throughout training. The curves represent the average and standard deviation over 10 random initializations for the models.}
    \label{fig:image-classifier-results}
\end{figure}

\section{Conclusion}
\label{sec:conclusion}

In our work we analysed the propsects of privacy advantage in QML pipelines. Our noise propagation argument shows that shot noise and hardware noise can induce approximately Gaussian perturbations on loss gradients, but this resemblance is not enough for practical training-time differential privacy because the resulting variance is data dependent and the corresponding worst-case sensitivity control is too loose to serve as a competitive privacy mechanism on its own.

This observation motivated the structural result on clipping bias bound. For the class of local variational quantum models considered here, unitary evolution together with bounded observables yields deterministic control of per-example gradient norms. Through the softmax-cross-entropy link, this control transfers to the loss gradients actually used by DP-SGD, and therefore gives an explicit upper bound on clipping bias. In trainable regimes with sufficiently local observables and generators, this explains why quantum models can admit smaller effective clipping distortion than matched classical baselines under the same privacy mechanism. Moreover, as verified by our numerical results, classical Lipschitz networks with structurally imposed bounded gradients struggle in regimes where quantum models do not. At the same time, reduced clipping bias in quantum models allows them to outperform classical alternatives on synthetic and real-world tasks under private training. Our model equivalence criteria suggest that private training is a particularly natural setting for comparing quantum models with classical ones.

The resulting picture is therefore twofold. Formal record-level privacy should still come from classical DP-SGD with clipping and calibrated Gaussian noise added after aggregation. Native quantum noise is not a replacement for that mechanism; however, reduced clipping bias makes quantum models better suited to private training using DP-SGD. This also suggests a broader quantum-inspired design principle to be explored further in future work: if one designs classical architectures with analogous norm-preserving parameterizations and bounded activation/output layers, then part of the same clipping-bias advantage transfers beyond quantum hardware.

While the quantum models considered here are completely classically simulable, rapid advances in quantum hardware and the approaching era of fault tolerant quantum computing (FTQC) will allow quantum-native variational learning algorithms to benefit from the privacy results presented here. In this regime local quantum variational models will stay trainable due to the absence of noise, while still retaining the desirable boundedness property. 

\section{Statements and Declarations}

\textbf{Funding:} We acknowledge funding from Agence Nationale de la Recherche (Plan France 2030 through the project OQuLus ANR-22-PETQ-0013)

\noindent \textbf{Conflicts of interest:} We declare no conflicts of interest related to this work.

\noindent \textbf{Data Availability:} Data and code used to construct plots and tables in this work can be obtained from the companion Gihub repository \cite{sedrakyan2026privateqml}.

\noindent \textbf{Author Contributions:} T.S. initiated the project, contributed to the theory and simulation results, wrote the simulation and data aggregation code, and edited the manuscript. F.G. and E.K. contributed to the idea, edited the manuscript, and supervised the work.

\newpage

\bibliography{refs}


\begin{thebibliography}{37}
\ifx \bisbn   \undefined \def \bisbn  #1{ISBN #1}\fi
\ifx \binits  \undefined \def \binits#1{#1}\fi
\ifx \bauthor  \undefined \def \bauthor#1{#1}\fi
\ifx \batitle  \undefined \def \batitle#1{#1}\fi
\ifx \bjtitle  \undefined \def \bjtitle#1{#1}\fi
\ifx \bvolume  \undefined \def \bvolume#1{\textbf{#1}}\fi
\ifx \byear  \undefined \def \byear#1{#1}\fi
\ifx \bissue  \undefined \def \bissue#1{#1}\fi
\ifx \bfpage  \undefined \def \bfpage#1{#1}\fi
\ifx \blpage  \undefined \def \blpage #1{#1}\fi
\ifx \burl  \undefined \def \burl#1{\textsf{#1}}\fi
\ifx \doiurl  \undefined \def \doiurl#1{\url{https://doi.org/#1}}\fi
\ifx \betal  \undefined \def \betal{\textit{et al.}}\fi
\ifx \binstitute  \undefined \def \binstitute#1{#1}\fi
\ifx \binstitutionaled  \undefined \def \binstitutionaled#1{#1}\fi
\ifx \bctitle  \undefined \def \bctitle#1{#1}\fi
\ifx \beditor  \undefined \def \beditor#1{#1}\fi
\ifx \bpublisher  \undefined \def \bpublisher#1{#1}\fi
\ifx \bbtitle  \undefined \def \bbtitle#1{#1}\fi
\ifx \bedition  \undefined \def \bedition#1{#1}\fi
\ifx \bseriesno  \undefined \def \bseriesno#1{#1}\fi
\ifx \blocation  \undefined \def \blocation#1{#1}\fi
\ifx \bsertitle  \undefined \def \bsertitle#1{#1}\fi
\ifx \bsnm \undefined \def \bsnm#1{#1}\fi
\ifx \bsuffix \undefined \def \bsuffix#1{#1}\fi
\ifx \bparticle \undefined \def \bparticle#1{#1}\fi
\ifx \barticle \undefined \def \barticle#1{#1}\fi
\bibcommenthead
\ifx \bconfdate \undefined \def \bconfdate #1{#1}\fi
\ifx \botherref \undefined \def \botherref #1{#1}\fi
\ifx \url \undefined \def \url#1{\textsf{#1}}\fi
\ifx \bchapter \undefined \def \bchapter#1{#1}\fi
\ifx \bbook \undefined \def \bbook#1{#1}\fi
\ifx \bcomment \undefined \def \bcomment#1{#1}\fi
\ifx \oauthor \undefined \def \oauthor#1{#1}\fi
\ifx \citeauthoryear \undefined \def \citeauthoryear#1{#1}\fi
\ifx \endbibitem  \undefined \def \endbibitem {}\fi
\ifx \bconflocation  \undefined \def \bconflocation#1{#1}\fi
\ifx \arxivurl  \undefined \def \arxivurl#1{\textsf{#1}}\fi
\csname PreBibitemsHook\endcsname

\bibitem[\protect\citeauthoryear{Dwork and Roth}{2014}]{dwork2014foundations}
\begin{barticle}
\bauthor{\bsnm{Dwork}, \binits{C.}},
\bauthor{\bsnm{Roth}, \binits{A.}}:
\batitle{The algorithmic foundations of differential privacy}.
\bjtitle{Found. Trends Theor. Comput. Sci.}
\bvolume{9}(\bissue{3–4}),
\bfpage{211}--\blpage{407}
(\byear{2014})
\doiurl{10.1561/0400000042}
\end{barticle}
\endbibitem

\bibitem[\protect\citeauthoryear{Abadi et~al.}{2016}]{abadi2016deep}
\begin{bchapter}
\bauthor{\bsnm{Abadi}, \binits{M.}},
\bauthor{\bsnm{Chu}, \binits{A.}},
\bauthor{\bsnm{Goodfellow}, \binits{I.}},
\bauthor{\bsnm{McMahan}, \binits{H.B.}},
\bauthor{\bsnm{Mironov}, \binits{I.}},
\bauthor{\bsnm{Talwar}, \binits{K.}},
\bauthor{\bsnm{Zhang}, \binits{L.}}:
\bctitle{Deep learning with differential privacy}.
In: \bbtitle{Proceedings of the 2016 ACM SIGSAC Conference on Computer and Communications Security},
pp. \bfpage{308}--\blpage{318}
(\byear{2016}).
\doiurl{10.1145/2976749.2978318}
\end{bchapter}
\endbibitem

\bibitem[\protect\citeauthoryear{Mironov}{2017}]{mironov2017renyi}
\begin{botherref}
\oauthor{\bsnm{Mironov}, \binits{I.}}:
R{\'e}nyi differential privacy.
2017 IEEE 30th Computer Security Foundations Symposium (CSF),
263--275
(2017)
\doiurl{10.48550/arXiv.1702.07476}
{\href{https://arxiv.org/abs/1702.07476}{{arXiv:1702.07476}}}
\end{botherref}
\endbibitem

\bibitem[\protect\citeauthoryear{Wang et~al.}{2021}]{wang2019subsampled}
\begin{botherref}
\oauthor{\bsnm{Wang}, \binits{Y.-X.}},
\oauthor{\bsnm{Balle}, \binits{B.}},
\oauthor{\bsnm{Kasiviswanathan}, \binits{S.}}:
Subsampled rényi differential privacy and analytical moments accountant.
Journal of Privacy and Confidentiality
\textbf{10}(2)
(2021)
\doiurl{10.29012/jpc.723}
\end{botherref}
\endbibitem

\bibitem[\protect\citeauthoryear{Chen et~al.}{2020}]{clipping-geometry}
\begin{bchapter}
\bauthor{\bsnm{Chen}, \binits{X.}},
\bauthor{\bsnm{Wu}, \binits{Z.S.}},
\bauthor{\bsnm{Hong}, \binits{M.}}:
\bctitle{Understanding gradient clipping in private sgd: a geometric perspective}.
In: \bbtitle{Proceedings of the 34th International Conference on Neural Information Processing Systems}.
\bsertitle{NIPS '20}.
\bpublisher{Curran Associates Inc.},
\blocation{Red Hook, NY, USA}
(\byear{2020})
\end{bchapter}
\endbibitem

\bibitem[\protect\citeauthoryear{Koloskova et~al.}{2023}]{koloskova-clipbias}
\begin{bchapter}
\bauthor{\bsnm{Koloskova}, \binits{A.}},
\bauthor{\bsnm{Hendrikx}, \binits{H.}},
\bauthor{\bsnm{Stich}, \binits{S.U.}}:
\bctitle{Revisiting gradient clipping: Stochastic bias and tight convergence guarantees}.
In: \beditor{\bsnm{Krause}, \binits{A.}},
\beditor{\bsnm{Brunskill}, \binits{E.}},
\beditor{\bsnm{Cho}, \binits{K.}},
\beditor{\bsnm{Engelhardt}, \binits{B.}},
\beditor{\bsnm{Sabato}, \binits{S.}},
\beditor{\bsnm{Scarlett}, \binits{J.}} (eds.)
\bbtitle{Proceedings of the 40th International Conference on Machine Learning}.
\bsertitle{Proceedings of Machine Learning Research},
vol. \bseriesno{202},
pp. \bfpage{17343}--\blpage{17363}.
\bpublisher{PMLR},
\blocation{Honolulu, Hawaii, USA}
(\byear{2023}).
\burl{https://proceedings.mlr.press/v202/koloskova23a.html}
\end{bchapter}
\endbibitem

\bibitem[\protect\citeauthoryear{Pichapati et~al.}{2019}]{adaclip}
\begin{botherref}
\oauthor{\bsnm{Pichapati}, \binits{V.}},
\oauthor{\bsnm{Suresh}, \binits{A.T.}},
\oauthor{\bsnm{Yu}, \binits{F.X.}},
\oauthor{\bsnm{Reddi}, \binits{S.J.}},
\oauthor{\bsnm{Kumar}, \binits{S.}}:
AdaCliP: Adaptive Clipping for Private SGD
(2019).
\url{https://arxiv.org/abs/1908.07643}
\end{botherref}
\endbibitem

\bibitem[\protect\citeauthoryear{Zaman et~al.}{2023}]{qml-survey}
\begin{botherref}
\oauthor{\bsnm{Zaman}, \binits{K.}},
\oauthor{\bsnm{Marchisio}, \binits{A.}},
\oauthor{\bsnm{Hanif}, \binits{M.A.}},
\oauthor{\bsnm{Shafique}, \binits{M.}}:
{A Survey on Quantum Machine Learning: Current Trends, Challenges, Opportunities, and the Road Ahead}
(2023)
{\href{https://arxiv.org/abs/2310.10315}{{arXiv:2310.10315}}}
\end{botherref}
\endbibitem

\bibitem[\protect\citeauthoryear{Chang and Cerezo}{2025}]{primer-qml}
\begin{botherref}
\oauthor{\bsnm{Chang}, \binits{S.Y.}},
\oauthor{\bsnm{Cerezo}, \binits{M.}}:
{A Primer on Quantum Machine Learning}
(2025)
{\href{https://arxiv.org/abs/2511.15969}{{arXiv:2511.15969}}}
\end{botherref}
\endbibitem

\bibitem[\protect\citeauthoryear{Watkins et~al.}{2023}]{watkins2023qml}
\begin{barticle}
\bauthor{\bsnm{Watkins}, \binits{W.M.}},
\bauthor{\bsnm{Chen}, \binits{S.Y.-C.}},
\bauthor{\bsnm{Yoo}, \binits{S.}}:
\batitle{Quantum machine learning with differential privacy}.
\bjtitle{Scientific Reports}
\bvolume{13},
\bfpage{2023}
(\byear{2023})
\doiurl{10.1038/s41598-022-24082-z}
{\href{https://arxiv.org/abs/2103.06232}{{arXiv:2103.06232}}}
\end{barticle}
\endbibitem

\bibitem[\protect\citeauthoryear{Rofougaran et~al.}{2024}]{federatedqml2023}
\begin{bchapter}
\bauthor{\bsnm{Rofougaran}, \binits{R.}},
\bauthor{\bsnm{Yoo}, \binits{S.}},
\bauthor{\bsnm{Tseng}, \binits{H.-H.}},
\bauthor{\bsnm{Chen}, \binits{S.Y.-C.}}:
\bctitle{Federated quantum machine learning with differential privacy}.
In: \bbtitle{ICASSP 2024 - 2024 IEEE International Conference on Acoustics, Speech and Signal Processing (ICASSP)},
pp. \bfpage{9811}--\blpage{9815}
(\byear{2024}).
\doiurl{10.1109/ICASSP48485.2024.10447155}
\end{bchapter}
\endbibitem

\bibitem[\protect\citeauthoryear{Li et~al.}{2023}]{li2023qdpmeas}
\begin{botherref}
\oauthor{\bsnm{Li}, \binits{Y.}},
\oauthor{\bsnm{Zhao}, \binits{Y.}},
\oauthor{\bsnm{Zhang}, \binits{X.}},
\oauthor{\bsnm{Zhong}, \binits{H.}},
\oauthor{\bsnm{Pan}, \binits{M.}},
\oauthor{\bsnm{Zhang}, \binits{C.}}:
Differential privacy preserving quantum computing via measurement noise
(2023)
{\href{https://arxiv.org/abs/2312.08210}{{arXiv:2312.08210}}}
\end{botherref}
\endbibitem

\bibitem[\protect\citeauthoryear{Papernot et~al.}{2021}]{tanh-dp}
\begin{barticle}
\bauthor{\bsnm{Papernot}, \binits{N.}},
\bauthor{\bsnm{Thakurta}, \binits{A.}},
\bauthor{\bsnm{Song}, \binits{S.}},
\bauthor{\bsnm{Chien}, \binits{S.}},
\bauthor{\bsnm{Erlingsson}, \binits{{\'U}.}}:
\batitle{Tempered sigmoid activations for deep learning with differential privacy}.
\bjtitle{Proceedings of the AAAI Conference on Artificial Intelligence}
\bvolume{35}(\bissue{10}),
\bfpage{9312}--\blpage{9321}
(\byear{2021})
\doiurl{10.1609/aaai.v35i10.17123}
\end{barticle}
\endbibitem

\bibitem[\protect\citeauthoryear{Arjovsky et~al.}{2016}]{arjovsky2016unitary}
\begin{bchapter}
\bauthor{\bsnm{Arjovsky}, \binits{M.}},
\bauthor{\bsnm{Shah}, \binits{A.}},
\bauthor{\bsnm{Bengio}, \binits{Y.}}:
\bctitle{Unitary evolution recurrent neural networks}.
In: \beditor{\bsnm{Balcan}, \binits{M.F.}},
\beditor{\bsnm{Weinberger}, \binits{K.Q.}} (eds.)
\bbtitle{Proceedings of The 33rd International Conference on Machine Learning}.
\bsertitle{Proceedings of Machine Learning Research},
vol. \bseriesno{48},
pp. \bfpage{1120}--\blpage{1128}.
\bpublisher{PMLR},
\blocation{New York, New York, USA}
(\byear{2016}).
\burl{https://proceedings.mlr.press/v48/arjovsky16.html}
\end{bchapter}
\endbibitem

\bibitem[\protect\citeauthoryear{Mhammedi et~al.}{2017}]{mhammedi2017efficient}
\begin{bchapter}
\bauthor{\bsnm{Mhammedi}, \binits{Z.}},
\bauthor{\bsnm{Hellicar}, \binits{A.}},
\bauthor{\bsnm{Rahman}, \binits{A.}},
\bauthor{\bsnm{Bailey}, \binits{J.}}:
\bctitle{Efficient orthogonal parametrisation of recurrent neural networks using householder reflections}.
In: \beditor{\bsnm{Precup}, \binits{D.}},
\beditor{\bsnm{Teh}, \binits{Y.W.}} (eds.)
\bbtitle{Proceedings of the 34th International Conference on Machine Learning}.
\bsertitle{Proceedings of Machine Learning Research},
vol. \bseriesno{70},
pp. \bfpage{2401}--\blpage{2409}.
\bpublisher{PMLR},
\blocation{Sydney, Australia}
(\byear{2017}).
\burl{https://proceedings.mlr.press/v70/mhammedi17a.html}
\end{bchapter}
\endbibitem

\bibitem[\protect\citeauthoryear{B{\'{e}}thune et~al.}{2024}]{bethune2024dpsgd}
\begin{bchapter}
\bauthor{\bsnm{B{\'{e}}thune}, \binits{L.}},
\bauthor{\bsnm{Massena}, \binits{T.}},
\bauthor{\bsnm{Boissin}, \binits{T.}},
\bauthor{\bsnm{Bellet}, \binits{A.}},
\bauthor{\bsnm{Mamalet}, \binits{F.}},
\bauthor{\bsnm{Prudent}, \binits{Y.}},
\bauthor{\bsnm{Friedrich}, \binits{C.}},
\bauthor{\bsnm{Serrurier}, \binits{M.}},
\bauthor{\bsnm{Vigouroux}, \binits{D.}}:
\bctitle{{DP-SGD} without clipping: The lipschitz neural network way}.
In: \bbtitle{The Twelfth International Conference on Learning Representations}
(\byear{2024}).
\burl{https://openreview.net/forum?id=BEyEziZ4R6}
\end{bchapter}
\endbibitem

\bibitem[\protect\citeauthoryear{Hirche et~al.}{2023}]{hirche2022}
\begin{barticle}
\bauthor{\bsnm{Hirche}, \binits{C.}},
\bauthor{\bsnm{Rouzé}, \binits{C.}},
\bauthor{\bsnm{França}, \binits{D.S.}}:
\batitle{Quantum differential privacy: An information theory perspective}.
\bjtitle{IEEE Transactions on Information Theory}
\bvolume{69}(\bissue{9}),
\bfpage{5771}--\blpage{5787}
(\byear{2023})
\doiurl{10.1109/TIT.2023.3272904}
\end{barticle}
\endbibitem

\bibitem[\protect\citeauthoryear{Angrisani et~al.}{2023}]{angrisani2023unifying}
\begin{botherref}
\oauthor{\bsnm{Angrisani}, \binits{A.}},
\oauthor{\bsnm{Doosti}, \binits{M.}},
\oauthor{\bsnm{Kashefi}, \binits{E.}}:
A unifying framework for differentially private quantum algorithms
(2023).
\url{https://arxiv.org/abs/2307.04733}
\end{botherref}
\endbibitem

\bibitem[\protect\citeauthoryear{Du et~al.}{2021}]{quantum-adversarial-robustness}
\begin{barticle}
\bauthor{\bsnm{Du}, \binits{Y.}},
\bauthor{\bsnm{Hsieh}, \binits{M.-H.}},
\bauthor{\bsnm{Liu}, \binits{T.}},
\bauthor{\bsnm{Tao}, \binits{D.}},
\bauthor{\bsnm{Liu}, \binits{N.}}:
\batitle{Quantum noise protects quantum classifiers against adversaries}.
\bjtitle{Phys. Rev. Res.}
\bvolume{3},
\bfpage{023153}
(\byear{2021})
\doiurl{10.1103/PhysRevResearch.3.023153}
\end{barticle}
\endbibitem

\bibitem[\protect\citeauthoryear{Zhao et~al.}{2024}]{bridging2024}
\begin{bchapter}
\bauthor{\bsnm{Zhao}, \binits{Y.}},
\bauthor{\bsnm{Zhong}, \binits{H.}},
\bauthor{\bsnm{Zhang}, \binits{X.}},
\bauthor{\bsnm{Li}, \binits{Y.}},
\bauthor{\bsnm{Zhang}, \binits{C.}},
\bauthor{\bsnm{Pan}, \binits{M.}}:
\bctitle{Bridging quantum computing and differential privacy: Insights into quantum computing privacy}.
In: \bbtitle{2024 IEEE International Conference on Quantum Computing and Engineering (QCE)},
vol. \bseriesno{01},
pp. \bfpage{13}--\blpage{24}
(\byear{2024}).
\doiurl{10.1109/QCE60285.2024.00012}
\end{bchapter}
\endbibitem

\bibitem[\protect\citeauthoryear{Heredge et~al.}{2025}]{heredge2025characterizing}
\begin{botherref}
\oauthor{\bsnm{Heredge}, \binits{J.}},
\oauthor{\bsnm{Kumar}, \binits{N.}},
\oauthor{\bsnm{Herman}, \binits{D.}},
\oauthor{\bsnm{Chakrabarti}, \binits{S.}},
\oauthor{\bsnm{Yalovetzky}, \binits{R.}},
\oauthor{\bsnm{Sureshbabu}, \binits{S.H.}},
\oauthor{\bsnm{Li}, \binits{C.}},
\oauthor{\bsnm{Pistoia}, \binits{M.}}:
Characterizing privacy in quantum machine learning.
npj Quantum Information
\textbf{11}
(2025)
\doiurl{10.1038/s41534-025-01022-z}
\end{botherref}
\endbibitem

\bibitem[\protect\citeauthoryear{Su et~al.}{2025}]{song2025unlearning}
\begin{botherref}
\oauthor{\bsnm{Su}, \binits{J.}},
\oauthor{\bsnm{He}, \binits{R.}},
\oauthor{\bsnm{Li}, \binits{G.}},
\oauthor{\bsnm{Qin}, \binits{S.}},
\oauthor{\bsnm{He}, \binits{Z.}},
\oauthor{\bsnm{Situ}, \binits{H.}},
\oauthor{\bsnm{Gao}, \binits{F.}}:
From membership-privacy leakage to quantum machine unlearning
(2025)
{\href{https://arxiv.org/abs/2509.06086}{{arXiv:2509.06086}}}
\end{botherref}
\endbibitem

\bibitem[\protect\citeauthoryear{Rath and Date}{2024}]{Rath2024}
\begin{barticle}
\bauthor{\bsnm{Rath}, \binits{M.}},
\bauthor{\bsnm{Date}, \binits{H.}}:
\batitle{Quantum data encoding: a comparative analysis of classical-to-quantum mapping techniques and their impact on machine learning accuracy}.
\bjtitle{EPJ Quantum Technology}
\bvolume{11}(\bissue{1}),
\bfpage{72}
(\byear{2024})
\doiurl{10.1140/epjqt/s40507-024-00285-3}
\end{barticle}
\endbibitem

\bibitem[\protect\citeauthoryear{P{\'{e}}rez-Salinas et~al.}{2020}]{PerezSalinas2020datareuploading}
\begin{barticle}
\bauthor{\bsnm{P{\'{e}}rez-Salinas}, \binits{A.}},
\bauthor{\bsnm{Cervera-Lierta}, \binits{A.}},
\bauthor{\bsnm{Gil-Fuster}, \binits{E.}},
\bauthor{\bsnm{Latorre}, \binits{J.I.}}:
\batitle{Data re-uploading for a universal quantum classifier}.
\bjtitle{{Quantum}}
\bvolume{4},
\bfpage{226}
(\byear{2020})
\doiurl{10.22331/q-2020-02-06-226}
\end{barticle}
\endbibitem

\bibitem[\protect\citeauthoryear{Larocca et~al.}{2025}]{Larocca2025}
\begin{barticle}
\bauthor{\bsnm{Larocca}, \binits{M.}},
\bauthor{\bsnm{Thanasilp}, \binits{S.}},
\bauthor{\bsnm{Wang}, \binits{S.}},
\bauthor{\bsnm{Sharma}, \binits{K.}},
\bauthor{\bsnm{Biamonte}, \binits{J.}},
\bauthor{\bsnm{Coles}, \binits{P.J.}},
\bauthor{\bsnm{Cincio}, \binits{L.}},
\bauthor{\bsnm{McClean}, \binits{J.R.}},
\bauthor{\bsnm{Holmes}, \binits{Z.}},
\bauthor{\bsnm{Cerezo}, \binits{M.}}:
\batitle{Barren plateaus in variational quantum computing}.
\bjtitle{Nature Reviews Physics}
\bvolume{7}(\bissue{4}),
\bfpage{174}--\blpage{189}
(\byear{2025})
\doiurl{10.1038/s42254-025-00813-9}
\end{barticle}
\endbibitem

\bibitem[\protect\citeauthoryear{Shokri et~al.}{2017}]{shokri2017membership}
\begin{bchapter}
\bauthor{\bsnm{Shokri}, \binits{R.}},
\bauthor{\bsnm{Stronati}, \binits{M.}},
\bauthor{\bsnm{Song}, \binits{C.}},
\bauthor{\bsnm{Shmatikov}, \binits{V.}}:
\bctitle{Membership inference attacks against machine learning models}.
In: \bbtitle{IEEE Symposium on Security and Privacy},
pp. \bfpage{3}--\blpage{18}
(\byear{2017}).
\doiurl{10.1109/SP.2017.41}
\end{bchapter}
\endbibitem

\bibitem[\protect\citeauthoryear{Yeom et~al.}{2018}]{yeom2018privacyrisk}
\begin{botherref}
\oauthor{\bsnm{Yeom}, \binits{S.}},
\oauthor{\bsnm{Giacomelli}, \binits{I.}},
\oauthor{\bsnm{Fredrikson}, \binits{M.}},
\oauthor{\bsnm{Jha}, \binits{S.}}:
Privacy risk in machine learning: Analyzing the connection to overfitting.
2018 IEEE Computer Security Foundations Symposium (CSF),
268--282
(2018)
\doiurl{10.1109/CSF.2018.00027}
{\href{https://arxiv.org/abs/1709.01604}{{arXiv:1709.01604}}}
\end{botherref}
\endbibitem

\bibitem[\protect\citeauthoryear{Carlini et~al.}{2022}]{carlini2021lira}
\begin{bchapter}
\bauthor{\bsnm{Carlini}, \binits{N.}},
\bauthor{\bsnm{Chien}, \binits{S.}},
\bauthor{\bsnm{Nasr}, \binits{M.}},
\bauthor{\bsnm{Song}, \binits{S.}},
\bauthor{\bsnm{Terzis}, \binits{A.}},
\bauthor{\bsnm{Tramèr}, \binits{F.}}:
\bctitle{Membership inference attacks from first principles}.
In: \bbtitle{2022 IEEE Symposium on Security and Privacy (SP)},
pp. \bfpage{1897}--\blpage{1914}
(\byear{2022}).
\doiurl{10.1109/SP46214.2022.9833649}
\end{bchapter}
\endbibitem

\bibitem[\protect\citeauthoryear{Kairouz et~al.}{2021}]{kairouz2021fl}
\begin{barticle}
\bauthor{\bsnm{Kairouz}, \binits{P.}},
\bauthor{\bsnm{McMahan}, \binits{H.B.}},
\bauthor{\bsnm{Avent}, \binits{B.}},
\bauthor{\bsnm{Bellet}, \binits{A.}},
\bauthor{\bsnm{Bennis}, \binits{M.}},
\bauthor{\bsnm{Bhagoji}, \binits{A.N.}},
\bauthor{\bsnm{Bonawitz}, \binits{K.}},
\bauthor{\bsnm{Charles}, \binits{Z.}},
\bauthor{\bsnm{Cormode}, \binits{G.}},
\bauthor{\bsnm{Cummings}, \binits{R.}},
\bauthor{\bsnm{D'Oliveira}, \binits{R.G.L.}},
\bauthor{\bsnm{Eichner}, \binits{H.}},
\bauthor{\bsnm{El~Rouayheb}, \binits{S.}},
\bauthor{\bsnm{Evans}, \binits{D.}},
\bauthor{\bsnm{Gardner}, \binits{J.}},
\bauthor{\bsnm{Garrett}, \binits{Z.}},
\bauthor{\bsnm{Gasc{\'o}n}, \binits{A.}},
\bauthor{\bsnm{Ghazi}, \binits{B.}},
\bauthor{\bsnm{Gibbons}, \binits{P.B.}},
\bauthor{\bsnm{Gruteser}, \binits{M.}},
\bauthor{\bsnm{Harchaoui}, \binits{Z.}},
\bauthor{\bsnm{He}, \binits{C.}},
\bauthor{\bsnm{He}, \binits{L.}},
\bauthor{\bsnm{Huo}, \binits{Z.}},
\bauthor{\bsnm{Hutchinson}, \binits{B.}},
\bauthor{\bsnm{Hsu}, \binits{J.}},
\bauthor{\bsnm{Jaggi}, \binits{M.}},
\bauthor{\bsnm{Javidi}, \binits{T.}},
\bauthor{\bsnm{Joshi}, \binits{G.}},
\bauthor{\bsnm{Khodak}, \binits{M.}},
\bauthor{\bsnm{Kone{\v{c}}n{\'y}}, \binits{J.}},
\bauthor{\bsnm{Korolova}, \binits{A.}},
\bauthor{\bsnm{Koushanfar}, \binits{F.}},
\bauthor{\bsnm{Koyejo}, \binits{S.}},
\bauthor{\bsnm{Lepoint}, \binits{T.}},
\bauthor{\bsnm{Liu}, \binits{Y.}},
\bauthor{\bsnm{Mittal}, \binits{P.}},
\bauthor{\bsnm{Mohri}, \binits{M.}},
\bauthor{\bsnm{Nock}, \binits{R.}},
\bauthor{\bsnm{{\"O}zg{\"u}r}, \binits{A.}},
\bauthor{\bsnm{Pagh}, \binits{R.}},
\bauthor{\bsnm{Qi}, \binits{H.}},
\bauthor{\bsnm{Ramage}, \binits{D.}},
\bauthor{\bsnm{Raskar}, \binits{R.}},
\bauthor{\bsnm{Song}, \binits{D.}},
\bauthor{\bsnm{Song}, \binits{W.}},
\bauthor{\bsnm{Stich}, \binits{S.U.}},
\bauthor{\bsnm{Sun}, \binits{Z.}},
\bauthor{\bsnm{Suresh}, \binits{A.T.}},
\bauthor{\bsnm{Tram{\`e}r}, \binits{F.}},
\bauthor{\bsnm{Vepakomma}, \binits{P.}},
\bauthor{\bsnm{Wang}, \binits{J.}},
\bauthor{\bsnm{Xiong}, \binits{L.}},
\bauthor{\bsnm{Xu}, \binits{Z.}},
\bauthor{\bsnm{Yang}, \binits{Q.}},
\bauthor{\bsnm{Yu}, \binits{F.X.}},
\bauthor{\bsnm{Yu}, \binits{H.}},
\bauthor{\bsnm{Zhao}, \binits{S.}}:
\batitle{Advances and open problems in federated learning}.
\bjtitle{Foundations and Trends in Machine Learning}
\bvolume{14}(\bissue{1--2}),
\bfpage{1}--\blpage{210}
(\byear{2021})
\doiurl{10.1561/2200000083}
\end{barticle}
\endbibitem

\bibitem[\protect\citeauthoryear{Mahawaga~Arachchige et~al.}{2020}]{MahawagaArachchige2019LocalDP}
\begin{barticle}
\bauthor{\bsnm{Mahawaga~Arachchige}, \binits{P.C.}},
\bauthor{\bsnm{Bertok}, \binits{P.}},
\bauthor{\bsnm{Khalil}, \binits{I.}},
\bauthor{\bsnm{Liu}, \binits{D.}},
\bauthor{\bsnm{Camtepe}, \binits{S.}},
\bauthor{\bsnm{Atiquzzaman}, \binits{M.}}:
\batitle{Local differential privacy for deep learning}.
\bjtitle{IEEE Internet of Things Journal}
\bvolume{7}(\bissue{7}),
\bfpage{5827}--\blpage{5842}
(\byear{2020})
\doiurl{10.1109/JIOT.2019.2952146}
\end{barticle}
\endbibitem

\bibitem[\protect\citeauthoryear{Lecuyer et~al.}{2019}]{pixeldp}
\begin{bchapter}
\bauthor{\bsnm{Lecuyer}, \binits{M.}},
\bauthor{\bsnm{Atlidakis}, \binits{V.}},
\bauthor{\bsnm{Geambasu}, \binits{R.}},
\bauthor{\bsnm{Hsu}, \binits{D.}},
\bauthor{\bsnm{Jana}, \binits{S.}}:
\bctitle{Certified robustness to adversarial examples with differential privacy}.
In: \bbtitle{2019 IEEE Symposium on Security and Privacy (SP)},
pp. \bfpage{656}--\blpage{672}
(\byear{2019}).
\doiurl{10.1109/SP.2019.00044}
\end{bchapter}
\endbibitem

\bibitem[\protect\citeauthoryear{Wierichs et~al.}{2022}]{Wierichs2022generalparameter}
\begin{barticle}
\bauthor{\bsnm{Wierichs}, \binits{D.}},
\bauthor{\bsnm{Izaac}, \binits{J.}},
\bauthor{\bsnm{Wang}, \binits{C.}},
\bauthor{\bsnm{Lin}, \binits{C.Y.-Y.}}:
\batitle{General parameter-shift rules for quantum gradients}.
\bjtitle{{Quantum}}
\bvolume{6},
\bfpage{677}
(\byear{2022})
\doiurl{10.22331/q-2022-03-30-677}
\end{barticle}
\endbibitem

\bibitem[\protect\citeauthoryear{Khanal and Rivas}{2023}]{noise-in-qml}
\begin{bchapter}
\bauthor{\bsnm{Khanal}, \binits{B.}},
\bauthor{\bsnm{Rivas}, \binits{P.}}:
\bctitle{Evaluating the impact of noise on variational quantum circuits in nisq era devices}.
In: \bbtitle{2023 Congress in Computer Science, Computer Engineering, \& Applied Computing (CSCE)},
pp. \bfpage{1658}--\blpage{1664}
(\byear{2023}).
\doiurl{10.1109/CSCE60160.2023.00272}
\end{bchapter}
\endbibitem

\bibitem[\protect\citeauthoryear{Mele et~al.}{2026}]{Mele2026}
\begin{barticle}
\bauthor{\bsnm{Mele}, \binits{A.A.}},
\bauthor{\bsnm{Angrisani}, \binits{A.}},
\bauthor{\bsnm{Ghosh}, \binits{S.}},
\bauthor{\bsnm{Khatri}, \binits{S.}},
\bauthor{\bsnm{Eisert}, \binits{J.}},
\bauthor{\bsnm{Stilck~Fran{\c{c}}a}, \binits{D.}},
\bauthor{\bsnm{Quek}, \binits{Y.}}:
\batitle{Noise-induced shallow circuits and the absence of barren plateaus}.
\bjtitle{Nature Physics}
\bvolume{22}(\bissue{5}),
\bfpage{751}--\blpage{756}
(\byear{2026})
\doiurl{10.1038/s41567-026-03245-z}
\end{barticle}
\endbibitem

\bibitem[\protect\citeauthoryear{Sedrakyan}{2026}]{sedrakyan2026privateqml}
\begin{botherref}
\oauthor{\bsnm{Sedrakyan}, \binits{T.}}:
private-qml: Code for "Private training in quantum machine learning".
GitHub
(2026).
\url{https://github.com/tigran-sedrakyan/private-qml}
\end{botherref}
\endbibitem

\bibitem[\protect\citeauthoryear{Narayanan et~al.}{2019}]{narayanan2019dynamic}
\begin{botherref}
\oauthor{\bsnm{Narayanan}, \binits{A.}},
\oauthor{\bsnm{Dwivedi}, \binits{I.}},
\oauthor{\bsnm{Dariush}, \binits{B.}}:
Dynamic Traffic Scene Classification with Space-Time Coherence
(2019).
\url{https://arxiv.org/abs/1905.12708}
\end{botherref}
\endbibitem

\bibitem[\protect\citeauthoryear{Gharibyan et~al.}{2026}]{Gharibyan2026}
\begin{barticle}
\bauthor{\bsnm{Gharibyan}, \binits{H.}},
\bauthor{\bsnm{Karapetyan}, \binits{H.}},
\bauthor{\bsnm{Sedrakyan}, \binits{T.}},
\bauthor{\bsnm{Subasic}, \binits{P.}},
\bauthor{\bsnm{Su}, \binits{V.P.}},
\bauthor{\bsnm{Tanin}, \binits{R.H.}},
\bauthor{\bsnm{Tepanyan}, \binits{H.}}:
\batitle{Quantum image loading and classification: experiments on utility-scale quantum computers}.
\bjtitle{Quantum Machine Intelligence}
\bvolume{8}(\bissue{1}),
\bfpage{57}
(\byear{2026})
\doiurl{10.1007/s42484-026-00388-3}
\end{barticle}
\endbibitem

\end{thebibliography}

\newpage

\appendix
\section{Proofs of Theorems}
\label{app:proofs}

\noiseprop*
\begin{proof}
First, for simplicity, consider a model outputting an expectation value of a single observable. Let \(f(\theta)=\Tr\!\big(O\,\rho(\theta)\big)\) be an ideal expectation-value objective and \(\rho(\theta)=U(\theta)\rho(x) U(\theta)^\dagger\), where we omit the dependence of \(f\) on \(x\). For a fixed parameter coordinate \(\theta_j\), consider the standard two-shift parameter-shift gradient without shot noise:
\[
g_j(\theta)=\partial_{\theta_j}f(\theta)
=
\frac{1}{2}\Big(\mu_j^{+}(\theta)-\mu_j^{-}(\theta)\Big),
\]
where
\[
\mu_j^{\pm}(\theta):=f\!\left(\theta\pm \frac{\pi}{2}e_j\right)
\]
are the shifted expectations and \(e_j\) is the \(j\)-th standard basis vector. Let the corresponding finite-shot estimators be
\[
\widehat{\mu}_j^{\pm}(\theta):=\widehat f\!\left(\theta\pm \frac{\pi}{2}e_j\right),
\qquad
\widehat g_j(\theta)
=
\frac{1}{2}\Big(\widehat{\mu}_j^{+}(\theta)-\widehat{\mu}_j^{-}(\theta)\Big)
\]

Since each $O_k$ is a traceless Pauli operator, its single-shot eigenvalue outcomes take values in $\{\pm 1\}$. Let \(X_{\pm,1},\dots,X_{\pm,S}\in\{\pm 1\}\) denote the single-shot outcomes for \(S\) shots. Then
\[
\widehat{\mu}_j^{\pm}(\theta)=\frac{1}{S}\sum_{s=1}^{S}X_{\pm,s},
\qquad
\EV[X_{\pm,s}]=\mu_j^{\pm}(\theta),
\]
and therefore
\[
\Var\!\big[\widehat{\mu}_j^{\pm}(\theta)\big]
=
\frac{1}{S}\Var[X_{\pm,s}]
=
\frac{1}{S}\Big(1-(\mu_j^{\pm}(\theta))^2\Big)
\]
In particular,
\[
\Var[\widehat f(\theta)]
=
\frac{1}{S}\Big(1-f(\theta)^2\Big)
\]
Writing \(\widehat f(\theta)=f(\theta)+\xi(\theta)\), where \(\xi(\theta)\) is the zero-mean shot-noise term, gives
\[
\Var[\xi(\theta)]
=
\frac{1}{S}\Big(1-f(\theta)^2\Big)
\]
Similarly, writing \(\widehat g_j(\theta)=g_j(\theta)+\zeta_j(\theta)\), and because the two shifted shot sets are independent, we obtain
\[
\Var[\zeta_j(\theta)]
=
\frac{1}{4}\Big(\Var[\widehat{\mu}_j^{+}(\theta)]+\Var[\widehat{\mu}_j^{-}(\theta)]\Big)
=
\frac{1}{4S}\Big(2-\big((\mu_j^{+}(\theta))^2+(\mu_j^{-}(\theta))^2\big)\Big)
\]

To model hardware noise, consider a CPTP channel \(\mathcal{E}_\lambda\) applied to the ideal state before measurement, with noise strength \(\lambda\). The noisy objective is
\[
f_{\lambda}(\theta)=\Tr\!\big(O\,\mathcal E_\lambda(\rho(\theta))\big),
\]
and the corresponding infinite-shot gradient coordinate can be written as
\[
g_{j,\lambda}(\theta)=\partial_{\theta_j} f_{\lambda}(\theta)=g_j(\theta)+b_{j,\lambda}(\theta),
\]
where \(b_{j,\lambda}(\theta)\) is a generally nonzero bias term. For the global depolarizing channel
\[
\mathcal E_\lambda(\rho)=(1-\lambda)\rho+\lambda\,\Id/d,
\]
one has
\[
f_{\lambda}(\theta)
=
\Tr\!\big(O\,\mathcal E_\lambda(\rho(\theta))\big)
=
(1-\lambda)\Tr\!\big(O\,\rho(\theta)\big)+\lambda\,\Tr\!\Big(O\,\frac{\Id}{d}\Big)
=
(1-\lambda)\,f(\theta)
\]
for traceless Pauli observables, and therefore
\[
g_{j,\lambda}(\theta)=(1-\lambda)\,g_j(\theta)
\]
So the objective and the gradient coordinate are multiplicatively shrunk by the factor \(1-\lambda\).

If shot noise is also present, then the shifted noisy expectations satisfy
\[
\mu_{j,\lambda}^{\pm}(\theta):=f_{\lambda}\!\left(\theta\pm \frac{\pi}{2}e_j\right),
\qquad
\widehat{\mu}_{j,\lambda}^{\pm}(\theta):=\widehat f_{\lambda}\!\left(\theta\pm \frac{\pi}{2}e_j\right),
\]
and the corresponding variance becomes
\[
\Var\!\big[\widehat{\mu}^{\pm}_{j,\lambda}(\theta)\big]
=
\frac{1}{S}\Big(1-(\mu_{j,\lambda}^{\pm}(\theta))^2\Big)
\]
In particular,
\[
\Var[\widehat f_{\lambda}(\theta)]
=
\frac{1}{S}\Big(1-f_{\lambda}(\theta)^2\Big)
=
\frac{1}{S}\Big(1-(1-\lambda)^2f(\theta)^2\Big)
\]
Writing
\[
\widehat f_{\lambda}(\theta)=f(\theta)+\xi_{\lambda}(\theta),
\qquad
\widehat g_{j,\lambda}(\theta)=g_j(\theta)+\zeta_{j,\lambda}(\theta),
\]
gives
\[
\sigma_\xi^2 := \Var[\xi_{\lambda}(\theta)]
=
\frac{1}{S}\Big(1-(1-\lambda)^2f(\theta)^2\Big),
\]
\[
\sigma_{\zeta,j}^2 := \Var[\zeta_{j,\lambda}(\theta)]
=
\frac{1}{4S}\Big(2-(1-\lambda)^2\big((\mu_j^{+}(\theta))^2+(\mu_j^{-}(\theta))^2\big)\Big)
\]

In contrast to the shot-noise-only case, these perturbations are not centered around zero relative to the ideal model. Indeed,
\[
\mu^\xi := \EV[\xi_{\lambda}(\theta)] = -\lambda f(\theta),
\qquad
\mu^\zeta_j := \EV[\zeta_{j,\lambda}(\theta)] = -\lambda g_j(\theta)
\]
For large \(S\), the central limit theorem implies that the scalar perturbations are approximately Gaussian. We now apply this componentwise to the $K$-class logits. Write
\[
f := (f_1(x,\theta),\dots,f_K(x,\theta))^\top,
\qquad
\widehat f := (\widehat f_1(x,\theta),\dots,\widehat f_K(x,\theta))^\top,
\qquad
\xi := \widehat f-f,
\]
For a fixed parameter index \(j\), let
\[
g^{(j)} := (g_{j,1},\dots,g_{j,K})^\top,
\qquad
\widehat g^{(j)} := (\widehat g_{j,1},\dots,\widehat g_{j,K})^\top,
\qquad
\zeta^{(j)} := \widehat g^{(j)}-g^{(j)},
\]
Let
\[
\qquad
p_y(f):=\mathrm{softmax}_\tau(f)=  \mathrm{sm}(f) = \frac{e^{f_y/\tau}}{\sum_k e^{f_k / \tau}},
\qquad
\widehat p_y(f):=\mathrm{sm}(\widehat f) = \mathrm{sm}(f+\xi)
\]
Then the noisy estimator of the loss-gradient coordinate is
\[
\widehat{\partial_{\theta_j}\ell}
=
\frac{1}{\tau}\Bigg(
\sum_{k=1}^{K}\widehat p_k\Big(g_{j,k}(x,\theta)+\zeta_{j,k}(x,\theta)\Big)
-
\Big(g_{j,y}(x,\theta)+\zeta_{j,y}(x,\theta)\Big)
\Bigg)
\]
Setting \(\xi=0\) and \(\zeta^{(j)}=0\) recovers the noiseless gradient~\eqref{eq:obj-to-loss-grad}:
\[
\partial_{\theta_j}\ell
=
\frac{1}{\tau}\Bigg(\sum_{k=1}^{K}p_k\,g_{j,k}(x,\theta)-g_{j,y}(x,\theta)\Bigg)
\]

Since \(\widehat p=\mathrm{sm}(f+\xi)\) is nonlinear in \(\xi\), we use a first-order delta-method approximation and drop second-order products of the noise terms. Write
\[
\frac{\partial p_k}{\partial f_r}=\frac{1}{\tau}p_k(\delta_{kr}-p_r),
\qquad
J_p(f):=\partial_f\,\mathrm{sm}(f)=\frac{1}{\tau}(\mathrm{diag}(p)-pp^\top)
\]
Hence, linearizing the softmax around \(f\) gives
\[
\widehat p \approx p + J_p(f)\xi,
\qquad
\delta p := \widehat p-p \approx J_p(f)\xi
\]
Substituting \(\widehat p=p+\delta p\) into the partial derivative estimator gives
\[
\widehat{\partial_{\theta_j}\ell}
=
\frac{1}{\tau}\Big(
\sum_{k=1}^{K} p_k\big(g_{j,k}+\zeta_{j,k}\big)
+
\sum_{k=1}^{K}\delta p_k\big(g_{j,k}+\zeta_{j,k}\big)
-
\big(g_{j,y}+\zeta_{j,y}\big)
\Big)
\]
Keeping only first-order terms in \(\xi\) and \(\zeta^{(j)}\) yields
\[
\widehat{\partial_{\theta_j}\ell}
\approx
\frac{1}{\tau}\Big(\sum_{k=1}^{K} p_k g_{j,k}-g_{j,y}\Big)
+
\frac{1}{\tau}\Big(\sum_{k=1}^{K}p_k\zeta_{j,k}-\zeta_{j,y}\Big)
+
\frac{1}{\tau}\sum_{k=1}^{K}\delta p_k\,g_{j,k}
\]
Replacing \(\delta p = J_p(f)\xi\) and defining
\[
w^{(j)} := \tau J_p(f)g^{(j)}
\]
gives
\[
\widehat{\partial_{\theta_j}\ell}
\approx
\partial_{\theta_j}\ell
+
\underbrace{\frac{1}{\tau}\Big(p^\top\zeta^{(j)}-\zeta_{j,y}\Big)}_{\eta_j^{(\zeta)}}
+
\underbrace{\frac{1}{\tau^2}(w^{(j)})^\top\xi}_{\eta_j^{(\xi)}}
\]
Writing
\[
a:=p-e_y,
\]
where \(e_y\in\mathbb{R}^K\) is the \(y\)-th standard basis vector, one obtains
\[
\eta_j^{(\zeta)}=\frac{1}{\tau}a^\top\zeta^{(j)},
\qquad
\eta_j^{(\xi)}=\frac{1}{\tau^2}(w^{(j)})^\top\xi,
\]
and therefore, with
\[
\bar\mu_{\zeta,j}:=\EV[a^\top\zeta^{(j)}],
\qquad
\bar\mu_{\xi,j}:=\EV[(w^{(j)})^\top\xi],
\]
\[
\bar\sigma_{\zeta,j}^2:=\Var(a^\top\zeta^{(j)}),
\qquad
\bar\sigma_{\xi,j}^2:=\Var((w^{(j)})^\top\xi),
\]
assuming that \(\xi\) and \(\zeta^{(j)}\) are obtained from independent circuit evaluations, the first-order perturbation is approximately Gaussian with
\begin{equation}
\begin{aligned}
m
&=
\frac{1}{\tau}\bar\mu_{\zeta,j}
+
\frac{1}{\tau^2}\bar\mu_{\xi,j}\\
v
&=
\frac{1}{\tau^2}\bar\sigma_{\zeta,j}^2
+
\frac{1}{\tau^4}\bar\sigma_{\xi,j}^2
\label{eq:m_and_v}
\end{aligned}
\end{equation}

This proves the approximation \eqref{eq:gradient-plus-noise}.
\end{proof}

\clipbound*

\begin{proof}
Fix a data point $(x_i, y_i)$ and write $\rho_0 := \rho(x_i)$ for the encoded state. We begin with the logit gradient bound. Fix a class $k$, and consider first the single-parameter case $U(\theta)=e^{-i\theta H}$ for a Hermitian generator $H$. Differentiating $\rho(\theta)=U\rho_0 U^\dagger$ gives
\[
\partial_\theta \rho
=
(\partial_\theta U)\rho_0 U^\dagger + U\rho_0 (\partial_\theta U^\dagger)
=
-iHU\rho_0 U^\dagger + iU\rho_0 U^\dagger H
=
-i[H,\rho]
\]
Therefore, by cyclicity of the trace,
\[
\partial_\theta f_k
=
\Tr(O_k\,\partial_\theta\rho)
=
-i\,\Tr(O_k[H,\rho])
=
-i\,\Tr([O_k,H]\rho)
=
i\,\Tr(\rho[H,O_k]),
\]
and hence
\[
|\partial_\theta f_k|
=
\big|i\,\Tr(\rho[H,O_k])\big|
\;\le\;
\|\rho\|_1\,\|[H,O_k]\|
\;\le\;
2\,\|H\|\,\|O_k\|,
\]
using $\|\rho\|_1=1$. For the multi-parameter unitary, write
\[
U(\theta) = A_j\,e^{-i\theta_j H_j}\,V_j
\]
for each $j$, where $A_j$ collects the gates to the left of $e^{-i\theta_j H_j}$ and $V_j$ those to the right. Then
\[
\partial_{\theta_j}U
=
A_j(-iH_j e^{-i\theta_j H_j})V_j
=
-i(A_jH_jA_j^\dagger)\,U,
\]
similar to the single-parameter case, except with the evolved generator $H = A_j H_j A_j^\dagger$. Thus
\[
\partial_{\theta_j} f_k
=
i\,\Tr(\rho[H, O_k])
=
i\,\Tr\!\big(\tilde\rho_j [H_j,\,A_j^\dagger O_k A_j]\big),
\]
where $\tilde\rho_j = e^{-i\theta_j H_j}\,V_j\,\rho_0\,V_j^\dagger\,e^{i\theta_j H_j}$ by cyclicity. As before,
\[
|\partial_{\theta_j} f_k|
\;\le\;
\|[H, O_k]\|
=
\|[A_j H_j A_j^\dagger, O_k]\|
=
\|[H_j,\, A_j^\dagger O_k A_j]\|
\;\le\;
2\|H_j\|\,\|O_k\|
\]
Summing over coordinates yields
\begin{equation}
\|\partial_\theta f_k(x_i,\theta)\|_2^2
=
\sum_{j=1}^{p}|\partial_{\theta_j}f_k|^2
\;\le\;
4\|O_k\|^2\sum_{j=1}^{p}\|H_j\|^2
\;\le\;
G^2,
\label{eq:main-grad-bound}
\end{equation}
For the loss gradient, stack the per-coordinate identities~\eqref{eq:obj-to-loss-grad} for $j=1,\dots,p$ into a single vector equation. Viewing each $\partial_\theta f_k(x_i,\theta)$ as a vector in $\mathbb{R}^{p}$, the per-example loss gradient is
\[
g_i(\theta)
=
\partial_\theta\ell(f(x_i,\theta), y_i)
=
\frac{1}{\tau}\left(\sum_{k=1}^{K} p_k(x_i,\theta)\,\partial_\theta f_k(x_i,\theta) - \partial_\theta f_{y_i}(x_i,\theta)\right)
\]
Taking the Euclidean norm and applying the triangle inequality,
\[
\|g_i(\theta)\|_2
\;\le\;
\frac{1}{\tau}\left(\sum_{k=1}^{K} p_k(x_i,\theta)\,\|\partial_\theta f_k(x_i,\theta)\|_2 + \|\partial_\theta f_{y_i}(x_i,\theta)\|_2\right)
\;\le\;
\frac{2G}{\tau},
\]
where the last inequality uses $\sum_k p_k = 1$ together with the logit-gradient bound~\eqref{eq:main-grad-bound}, which holds uniformly over $k$.

It remains to bound the clipping bias. By definition $\bar g_i(\theta) - g_i(\theta) = \Pi_C(g_i(\theta)) - g_i(\theta)$, and for any vector $g$ one has $\|\Pi_C(g) - g\|_2 = (\|g\|_2 - C)_+$. Hence Jensen's inequality together with the uniform bound~\eqref{eq:loss-grad-bound} gives
\[
\|b_C(\theta)\|_2
=
\big\|\EV[\bar g_i(\theta) - g_i(\theta)]\big\|_2
\;\le\;
\EV\,\big(\|g_i(\theta)\|_2 - C\big)_+
\;\le\;
\left(\frac{2G}{\tau} - C\right)_+,
\]
which is~\eqref{eq:batch-clipping-bound}.
\end{proof}

\end{document}